\newtcbox{\mymath}[1][]{ 
    nobeforeafter, math upper, tcbox raise base,
    enhanced, colframe=blue!10,
    colback=blue!10, boxrule=1pt,
    #1}
\newtcolorbox{bluebox}{breakable,nobeforeafter,tcbox raise base,enhanced, colframe=blue!10,
    colback=blue!10, boxrule=1pt,}
\newtcolorbox{greenbox}{breakable,nobeforeafter,tcbox raise base,enhanced, colframe=green!10,
    colback=green!10, boxrule=1pt,}
\newtcolorbox{redbox}{breakable,nobeforeafter,tcbox raise base,enhanced, colframe=red!10,
    colback=red!10, boxrule=1pt,}
\newtcolorbox{greybox}{breakable,nobeforeafter,tcbox raise base,enhanced, colframe=black!5,
    colback=black!5, boxrule=1pt,}
\def\Gmaj{\stackrel{G}{\rightarrow}}
\renewcommand{\eqref}[1]{Eq.~(\ref{#1})}
\newcommand{\figref}[1]{Fig.~(\ref{#1})}
\newcommand{\secref}[1]{Section~\ref{#1}}
\newcommand{\appref}[1]{Appendix~\ref{#1}}
\newcommand{\lemref}[1]{Lemma~\ref{#1}}
\newcommand{\thmref}[1]{Theorem~\ref{#1}}
\newtheorem{proposition}{Proposition}
\newtheorem{theorem}{Theorem}
\newtheorem{lemma}{Lemma}
\newtheorem{corollary}{Corollary}
\theoremstyle{definition} 
\newtheorem{example}{Example}
\def\id{ {\mathbbm 1} }
\def\D{ {\mathcal D} }
\def\E{ {\mathcal E} }
\def\H{ {\mathcal H} }
\def\J{ {\mathcal J} }
\def\U{ {\mathcal U} }
\def\R{ {\mathcal R} }
\def\N{ {\mathcal N} }
\def\F{ {\mathcal F} }
\def\B{ {\mathcal B} }
\def\O{ {\mathcal O} }
\def\D{ {\mathcal D} }
\def\V{ {\mathcal V} }
\def\bms{ {\bm{s}} }
\def\maj{  \stackrel{\O}{\rightarrow} }
\definecolor{darkgreen}{rgb}{0.0, 0.5, 0.0}
\begin{document}

\title{A new approximate Eastin-Knill theorem }
\author{Rhea Alexander }
\email{ralexander@ugr.es} 
\affiliation{Departamento de Electromagnetismo y Física de la Materia, Universidad de Granada, 18071 Granada, Spain.}

\begin{abstract}

Transversal encoded gatesets are highly desirable for fault tolerant quantum computing. However, a quantum error correcting code which exactly corrects for local erasure noise and supports a universal set of transversal gates is ruled out by the Eastin-Knill theorem. Here we provide a new approximate Eastin-Knill theorem for the single-shot regime when we allow for some probability of error in the decoding. In particular, we show that a quantum error correcting code can support a universal set of transversal gates and approximately correct for local erasure if and only if the conditional min-entropy of the Choi state of the encoding and noise channel is upper bounded by a simple function of the worst-case error probability. Our no-go theorem can be computed by solving a semidefinite program, and, in the spirit of the original Eastin-Knill theorem, is formulated in terms of a condition that is both necessary and sufficient, ensuring achievability whenever it is passed.
As an example, we find that with $n=100$ physical qutrits we can encode $k=1$ logical qubit in the $W$-state code, which admits a universal transversal set of gates and corrects for single subsystem erasure with error probability of $\varepsilon = 0.005$. 
To establish our no-go result, we leverage tools from the resource theory of asymmetry, where, in the single-shot regime, a single (output state-dependent) resource monotone governs all state purifications.

\end{abstract}
    
\maketitle

\section{Introduction}

Recent experimental and theoretical advances are beginning to push the frontier of quantum computing closer to realization~\cite{Xu2024,Yamasaki2024Overhead,Acharya2025}. However, engineering scalable quantum architectures remains a formidable challenge, in part due to the significant overhead requirements associated with best known schemes for storing and processing quantum information in the fault-tolerant regime~\cite{Campbell2017Roads}.

Quantum error correcting codes that support transversal logical gates provide the most straightforward pathway to fault tolerance~\cite{Gottesman2006QECFT}.  By definition, a gate is transversal for a given code if it can be implemented at the physical level by applying unitaries which do not introduce entanglement within a given block of code. This intuitively limits errors from compounding in an uncontrollable manner. However, famously, the Eastin-Knill theorem~\cite{EastinKnill2009Restrictions} rules out the existence of a quantum error correcting code which both corrects for local erasure errors and supports a dense set of transversal gates, such the full unitary group.

There are a range of proposals for realizing universal fault tolerant gatesets~\cite{original_magic_states,Paetznick2013TransversalEC,OConnor2014Concatenated,Bombin2015GaugeFixing,Yoder2016Universal}, each of which can be viewed as making a particular choice regarding how to circumvent the Eastin-Knill theorem. One such approach is the magic state injection model~\cite{original_magic_states}, where transversality is restricted to gates generated by the discrete Clifford group, and universality is restored by the addition of a special class of ancillary states known as magic states. Despite recent promising advances in this direction~\cite{wills2024constant,Gidney2024MagicCultivation}, the overhead incurred by preparing high-quality magic states in the non-asymptotic regime is currently impractical.

An alternative route to achieving a universal set of fault-tolerant gates is to relax the requirement for exact quantum error correction~\cite{Woods2020continuousgroupsof,Faist2020ContinuousSymmQEC} by allowing for some probability of error in the decoding: the so-called regime of approximate quantum error correction~\cite{Leung1997Approximate,Oreshkov2010AppoximateQEC}. The hope being that, if this error can be sufficiently suppressed, then we can achieve a universal transversal set of gates without any additional overhead incurred by preparing high-quality magic states~\cite{original_magic_states}, or the need for intermediate rounds of error correction~\cite{Paetznick2013TransversalEC,OConnor2014Concatenated,Bombin2015GaugeFixing,Yoder2016Universal}.  

At its heart, the Eastin-Knill theorem expresses an inherent tension between quantum error correction and continuous symmetry principles~\cite{EastinKnill2009Restrictions}. Taking advantage of this connection, several recent studies have developed approximate variants of the
Eastin-Knill theorem~\cite{Woods2020continuousgroupsof,Faist2020ContinuousSymmQEC,Kubica2021ApproximateEK,Gupta2024Transversal}, each of which provides a robust variant of the original theorem by also encompassing the regime of finite error probability. More broadly, the study of covariant codes~\cite{Zhou2021newperspectives} finds relevance in diverse fields of physics beyond the study of fault tolerant quantum computing, such as in the context of quantum gravity~\cite{Almheiri2015AdS,Pastawski2015,Harlow2021QuantumGravity} and condensed matter physics~\cite{Brandao2019QECSpinChains}.

To date, most existing approximate Eastin-Knill theorems~\cite{Woods2020continuousgroupsof,Faist2020ContinuousSymmQEC,Kubica2021ApproximateEK} provide lower bounds on the error probability of unitarily covariant codes, and form necessary conditions on the existence of such codes. Meanwhile, there also exist several explicit constructions of $U(d)$-covariant codes~\cite{Yang2022Optimal,Kong2022NearOptCovariant}, each of which provide a specific upper bound on the (typically asymptotic) error probability. More recently, a very general no-go theorem on fault-tolerant encoded gatesets has been developed based on Lie algebraic properties of sets of correctable errors~\cite{Gupta2024Transversal}. Here we present a new approximate Eastin-Knill theorem in terms of a single necessary and sufficient condition, which can be viewed as complementary to these prior works. The benefit of this condition is that it is remarkably simple, and provides rigorous guarantees on whether or not a given code can operate below some desired error threshold in the presence of erasure noise.

The remainder of this paper is structured as follows. First, in \secref{sec:preliminaries} we establish some convenient notation and review some basic concepts from quantum error correction. In particular, we draw attention to \secref{sec:main_results} where we summarize the main results of this work. By drawing on techniques from the resource theory framework, in \secref{sec:asymmetry_purification} we construct the key tools that we will need to prove our main result. In \secref{sec:QEC} we prove our approximate Eastin-Knill theorem and consider a simple example of this theorem in action. Finally, in \secref{sec:outlook}, we conclude with a discussion of future directions for which we hope this work inspires.

\section{Preliminaries}
\label{sec:preliminaries}

\subsection{Notation}

Let us begin by briefly establishing some convenient notation.
Throughout we shall denote quantum systems by latin capital letters such as $A$, $B$, and $S$, with corresponding Hilbert spaces $\H_A$, $\H_B$, and $\H_S$, respectively. Throughout we restrict our attention to the consideration of Hilbert spaces which are finite-dimensional, namely, $d_S < \infty$ etc., where $d_S$ denotes the dimension of the Hilbert space $\H_S$. Accordingly, our results hold only within this finite-dimensional setting\footnote{One known way of circumventing the Eastin-Knill theorem~\cite{Hayden2021RefFrame} is by using infinite-dimensional codes.}. We will denote by $\B(S)$ the set of all bounded linear operators on the Hilbert space $\H_S$. We denote by $\D(S)$ the set of all quantum states (positive semidefinite, trace-one operators) on the Hilbert space $\H_S$. Moreover, we will use the notation $\mathrm{pure}(S)$ to denote the subset of $\D(S)$ consisting of all rank-1, pure states. When we want to clarify the input and output space of any quantum channel $\E : \B(A) \rightarrow \B(B)$ we will use $\E_{A\rightarrow B}$, and furthermore $\E_A \coloneq \E_{A \rightarrow A}$. Finally, we shall use the notation $[n] \coloneqq \{ 1, \dots, n\}$, for any integer $n$.

\subsection{Quantum codes with transversal gatesets}

Quantum error-correcting codes are an essential ingredient for building a quantum computer which can function in the presence of inevitable noise. The basic idea is to encode logical quantum information by mapping it to a subspace of a larger dimensional `physical' space. More formally, we shall define a quantum error correcting code via the completely positive and trace-preserving (CPTP) map $\E_{L \rightarrow P}$ from the Hilbert space of logical information $\H_L$ to this physical Hilbert space $\H_P$, which consists of $n$ subsystems $P \coloneqq P_1 P_2 \dots P_n$. The subspace of $P$ which is image of the encoding map $\E_{L \rightarrow P}$ is called the \textit{code space} $\mathcal{C}$.

The physical system $P$ in general will be subject to noise, which can always be represented by some quantum channel $\N_{P \rightarrow P'}$ where $P'$ may or may not be distinct from $P$. Here we shall call a code $\E_{L \rightarrow P}$ \textit{$\varepsilon$-correcting} under the noise channel $\N_{P \rightarrow P'}$, if there exists a recovery channel $\D_{P' \rightarrow L}$ such that for all pure states $\psi_L$ of the logical system
\begin{align} \label{eq:epsilon_correctability}
  F(  \D \circ \N \circ \E( \psi_L), \psi_L) \ge 1 - \varepsilon,
\end{align}
where $F (\rho , \sigma) \coloneqq (\tr \sqrt{ \sqrt{\rho} \sigma \sqrt{\rho} } )^2$ is the \textit{fidelity} between the two states $\rho$ and $\sigma$. Since the fidelity is (non-jointly) concave with respect to each of its arguments~\cite{wilde2013quantum}, if $\E$ is $\varepsilon$-correctable with respect to the noise channel $\N$, then there exists a decoder $\D$ such that
\begin{align}
 F(  \D \circ \N \circ \E( \rho_L), \rho_L) \ge \tr(\rho_L^2)( 1 - \varepsilon),
\end{align}
for any mixed state of the logical system $\rho_L$. Therefore, the condition of $\varepsilon$-correctability in \eqref{eq:epsilon_correctability} guarantees a notion of approximate recovery for all states $\rho_L$ of the logical system which are almost pure, as measured by the function $\tr[(\cdot)^2]$.

To implement useful quantum computations, an additional key ingredient is the ability to reliably process quantum information at the encoded level.  One of the simplest proposals for implementing a given logical gate in a way that is manifestly fault tolerant is if that gate is \textit{transversal}. In particular, we say that the gate $\U_L \coloneqq U_L(\cdot) U_L^\dagger$ can be implemented transversally in the code $\E_{L \rightarrow P}$ if there exists any collection of unitaries $\U_{P_i}$ for all $i \in [n]$ such that 
\begin{align} \label{eq:commute}
   \E_{L \rightarrow P} \circ \U_L = \bigotimes_{i = 1}^n \U^i_{P_i} \circ \E_{L \rightarrow P},
\end{align}
where the partitioning of $P$ into subsystems $P_i$ contains at most one physical subsystem (e.g. qubit) from a given block of quantum code. Ideally, we would like the transversality condition to hold for a universal set of gates, which is equivalent~\cite{Faist2020ContinuousSymmQEC} to \eqref{eq:commute} holding for all unitaries $U_L$ in $U(d_L)$, the group of unitaries of dimension $d_L$. Importantly, this condition is equivalent to the channel $\E$ being covariant with respect to the group $U(d_L)$. However, famously such a universal, transversal encoding which also can \textit{perfectly} correct local erasure errors, is ruled out by the Eastin-Knill theorem~\cite{EastinKnill2009Restrictions}.

Before proceeding, a quick note on our definition of $\varepsilon$-correctability compared to usage in the literature is in order. 
The definition of $\varepsilon$-correctability we consider here in \eqref{eq:epsilon_correctability} guarantees that a worst-case error of $\varepsilon$ is achieved when the decoding is applied \textit{locally} to any pure state of the logical system. This is a weaker constraint than that imposed by requiring that some worst-case error must be achieved whenever the code $\E_L$ is applied to part of some entangled state $\rho_{AL}$ over the bipartite system composed of the logical subsystem $L$ and any additional subsystem $A$. On the other hand, in general \eqref{eq:epsilon_correctability} enforces a stronger notion of $\varepsilon$-correctability than the average-case error $ \bar{\varepsilon} \coloneqq \max_{\D} \int_{U(d_L)} d g F(\D\circ \N \circ \E (\psi^g_L),\psi^g_L)$~\cite{Gilchrist2005Distance}, where the integral is performed over the uniform Haar measure $dg$ and where $\ket{\psi^g} \coloneqq g \ket{\psi}$ for some pure state vector $\ket{\psi}$ of the logical system. The average case error is related to the \textit{Choi error} $\varepsilon_{\mathrm{Choi}}$, which maximizes the optimal fidelity of the Choi state of the channel $(\D \circ \N \circ \E)_L$ with the Choi state of the identity channel on the logical system with respect to all possible decoders $\D$, via $\bar{\varepsilon} = \sqrt{\frac{d_L +1}{d_L}}\varepsilon_{\mathrm{Choi}}$~\cite{Horodecki1999Singlet,Nielsen2002Fidelity}. In fact, our main result (Corollary~\ref{cor:approx_eastin_knill}) indicates that an additional relation between the worst-case error in \eqref{eq:epsilon_correctability} and the Choi error holds, whenever we restrict our attention to $U(d)$-covariant channels. However, we leave the formalization of such a statement to future work.

\subsection{Main results}
\label{sec:main_results}

Our main contribution is to provide a new approximate Eastin-Knill theorem in terms of a single necessary and sufficient condition. We show that the code $\E_{L \rightarrow P}$ admits a transversal implementation of the full unitary group and is $\varepsilon$-correctable with respect to erasure of $m$ subsystems $\N_{P \rightarrow P'} \coloneqq \tr_{P_1 \dots P_m}$ if and only if 
\begin{align} \label{eq:mainresults_approx_ek}
  H_{\mathrm{min}}(L|P')_{J(\N \circ \E)} \le  - \log d_L(1 -c \varepsilon),
\end{align}
where $c  \coloneqq \frac{d_L + 1}{d_L}$, $ J (\E)$ is the Choi state of the channel $\E$, and $H_{\mathrm{min}}(A|B)_\Omega$ is the \textit{conditional min-entropy}~\cite{renner2005security} of the bipartite state $\Omega_{AB}$. By noting that the global minimum value of the conditional min-entropy with respect to any bipartite state over $LP$ is $-\log d_L$ (e.g.~\cite{tomamichel2013thesis}) which is achieved for maximally entangled states~\cite{Gour2024Inevitability}, we can immediately get a flavour for how \eqref{eq:mainresults_approx_ek} encodes the exact Eastin-Knill theorem~\cite{EastinKnill2009Restrictions}. Intuitively, this follows as a consequence of the fact that local subsystem erasure is an entanglement breaking channel, and hence we do not expect this optimum to be achieved.

Our entropic formulation of the approximate Eastin-Knill theorem expressed in \eqref{eq:mainresults_approx_ek} is remarkably simple, and moreover can be computed by solving a semidefinite program (SDP). Therefore, it can be solved efficiently for sufficiently low-dimensional systems. As an example, we show that for the case of the $U(d)$-covariant $W$-state code~\cite{Faist2020ContinuousSymmQEC}, this SDP can be reformulated as a simple analytic expression. In particular, we find that the W-state code, which admits a transversal implementation of the full unitary group, can $\varepsilon$-correct for the erasure of $N_e$ subsystems if and only if
    \begin{align} 
    \varepsilon \ge \frac{N_e}{n} \left( 1 - \frac{1}{d_L}\right),
\end{align}
where $n$ is the number of physical subsystems and $d_L$ is the dimension of the logical Hilbert space.
For example, this condition informs us that with $n=100$ physical qutrits we can encode $1$ logical qubit in the $W$-state code, which admits a universal transversal set of gates and corrects for single subsystem erasure with error probability of $\varepsilon = 0.005$. Finally, we construct an explicit optimal decoder for the $W$-state code for the case of known erasure.
\begin{figure}[t]
	\centering	\includegraphics[width=0.95\linewidth]{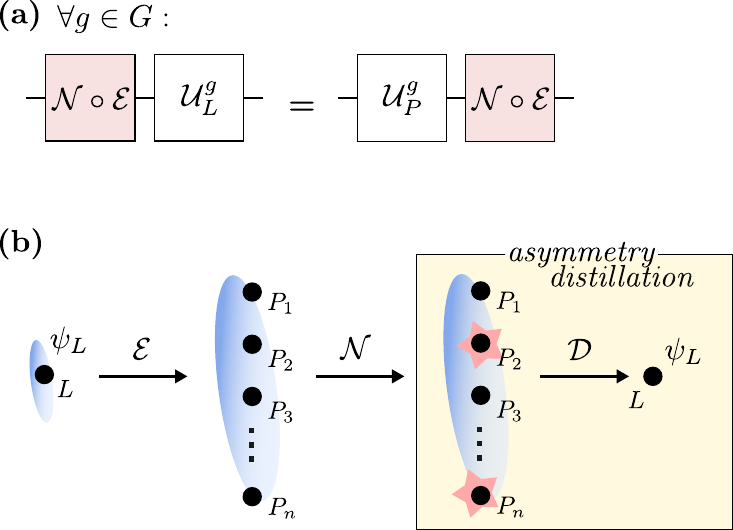}
\caption{\textbf{(Quantum error correction as asymmetry distillation).} Given any $G$-covariant encoder $\E_{L \rightarrow P}$ and $G$-covariant noise channel $\N_{P \rightarrow P}$, then their sequence $(N \circ \E)_{L \rightarrow P}$, as depicted in \textbf{(a)}, is also $G$-covariant.  \textbf{(b)} Whenever $(N \circ \E)_{L \rightarrow P}$ is indeed a $G$-covariant channel, as was first shown in Ref.~\cite{Zhou2021newperspectives},  the optimal decoder $\D_{P \rightarrow L}$ can also be assumed to be $G$-covariant (also see \lemref{lem:covariant_decoder_optimal}). Therefore, we can view quantum error correction as equivalent to asymmetry distillation: a fundamental task in the resource theory of asymmetry.
		\label{fig:QEC_asymmetry_distillation}}
\end{figure} 

To prove these results we leverage tools from the resource theory of asymmetry~\cite{Gour2008ResourceReference,Marvian2012PhdThesis}, a.k.a. \textit{asymmetry theory}. In particular, by drawing on prior work \cite{gour2018quantum,Marvian2020CoherenceDistillation} we present a series of purification theorems, which can be viewed as generalizing Noether's ``first-law-like" theorem~\cite{noether1918invarianten} for unitary quantum theory to ``second-law-like" theorems in purification scenarios where the symmetry-constrained dynamics in question are not necessarily unitary. In particular, for the case where the output state of a covariant transformation is pure, or approximately pure, it turns out that a single conditional min-entropy 
characterizes all allowed purifications within the resource theory of asymmetry in the single-shot regime~\cite{Marvian2020CoherenceDistillation}. To apply these results to the current setting, we generalize to the case where we require the purification procedure to work when applied to a (continuous) set of states. As was proven in Ref.~\cite{Zhou2021newperspectives}, and is shown schematically in \figref{fig:QEC_asymmetry_distillation}, given appropriate choices of group $G$ and unitary representations of $G$, quantum error correction can always be viewed as some form of asymmetry purification.

\section{Asymmetry purification}
\label{sec:asymmetry_purification}

The Eastin-Knill theorem~\cite{EastinKnill2009Restrictions} expresses an inherent tension between symmetry principles and exact quantum error correction. As such, it is natural to exploit tools from asymmetry theory~\cite{Gour2008ResourceReference,Marvian2012PhdThesis}, where symmetry breaking is viewed as a consumable resource.
In this section, we provide a brief introduction to the resource theory of asymmetry in the particular context of purification protocols. That is, those processes for which the desired output state of the procedure is pure or approximately pure, which, under certain conditions on the noise~\cite{Zhou2021newperspectives}, is precisely the setting of (covariant) quantum error correction.

\subsection{The resource theory of asymmetry}

At the highest level a resource theory~\cite{Chitambar2019QRT_review} assigns some collection of states $\F$ and operations $\O$ as free or readily available. In contrast, any state or indeed operation not belonging to one of these sets in some sense will constitute a resource with respect to the theory. Typically, these free sets are chosen to correspond to some fundamental limitation of nature or otherwise some practical constraint, allowing for a formal treatment of that restriction (e.g. see the recent review~\cite{gour2024resources}). 

One such resource theory of interest here is the resource theory of asymmetry~\cite{Gour2008ResourceReference,Marvian2012PhdThesis}, where the resource in question is the ability to break symmetry. 
More formally, given some compact symmetry group $G$ and adjoint unitary representation $g\rightarrow \U^g$ on $\B(\H)$, one can always define a corresponding resource theory of $G$-asymmetry~\cite{gour2008refframes,marvian2013theory}. The set of free states of this theory $\F_{G}$ are defined as the collection of all states which are invariant under the group action, namely, those states $\rho$ for which
\begin{align}
  \U_g (\rho) = \rho,
\end{align}
for all $g \in G$.
Any free state $\rho \in  \F_{G}$ is a symmetric state with respect to the group $G$, and conversely the resource states of the theory are those which are not invariant under the group action. The free operations $\O_{G}$ of this theory are the set of all channels which commute with the group action, called the set of \textit{$G$-covariant} channels. More formally, any $G$-covariant channel $\E$ from system $A$ to system $B$ satisfies
\begin{align} \label{eq:G-cov_channels}
   \E_{A \rightarrow B} \circ \U^g_A = \U^g_B \circ \E_{A \rightarrow B} , 
\end{align}
for all $g \in G$. Implicitly, any $G$-covariant channel is actually a $\left(G, \{\U^g_{A}\}, \{\U^g_B\}\right)$-covariant channel, that is, it depends on the choice of representations of the group on the input and output spaces. However, this notation is rather cumbersome, and so is typically kept implicit without causing too much confusion.
Like in any resource theory, a question of central concern within the resource theory of asymmetry is as follows: given any two states $\rho_A$ and $\sigma_B$ of systems $A$ and $B$, does there exist a free operation $\E_{A \rightarrow B} \in \O_G$ such that $\E(\rho) = \sigma$? Whenever such an operation does indeed exist, we shall write 
\begin{align} \label{eq:G_preorder}
    \rho \Gmaj \sigma.
\end{align}
Knowledge of the full pre-order in \eqref{eq:G_preorder} provides a full specification of the asymmetry properties of a quantum system. 

Recent work~\cite{gour2018quantum} gave the first set of complete but infinite set of asymmetry monotones which determine the pre-order in \eqref{eq:G_preorder}, which together are SDP computable. The set of monotones
 $H_\eta$ in question are constructed from the \textit{conditional
min-entropy}~\cite{renner2005security} via
\begin{align}
    H_\eta (\rho) &\coloneqq H_{\mathrm{min}}(R|A)_{\Sigma} \notag \\
    &= -\log \inf_{X \ge 0}\{\tr(X) \ | \ \id_R \otimes X_A - \Sigma_{RA} \ge 0 \}
\end{align}
specialized to the following bipartite, globally $G$-invariant state 
\begin{align}
    \Sigma_{RA} &\coloneqq \Pi^G_{RA}(\eta_R^T \otimes \rho_A) \notag \\ &= \int_G dg \, \bar{\U}^g_R \otimes \U^g_A (\eta_R^T \otimes \rho_A) 
\end{align}
where $\Pi^G_{RA}$ is the global \textit{G-twirl} on the space $\B(\H_{RA})$ and where the notation $\bar{\cdot}$ denotes complex conjugation with respect to a fixed orthonormal basis. In particular, $\rho_A \Gmaj \sigma_B$ if and only if~\cite{gour2018quantum}
\begin{align}
    H_\eta (\rho) \le H_\eta (\sigma),
\end{align}
for all possible states $\eta$ of the system $R$, where $R \cong B$. As we shall see however, when we restrict our attention to the task of resource purification, in fact only one such monotone suffices.

\subsection{Exact purification}
\label{sec:exact_symmetry_principles}

To warm up, let us consider the setting of exact purification, where our central question is whether or not there exists a $G$-covariant map $\E$ which exactly maps some arbitrary state $\rho$ to some pure state $\psi$. In this context, there exists a single monotone which provides a complete characterization of this problem, as we formalize in the following proposition.

\begin{proposition}[\cite{Marvian2020CoherenceDistillation}] \label{prop:asymmetry}

Let us consider two quantum systems $A$, $B$, and two states $\rho_A \in \D(A)$ and $\psi_B \in \mathrm{pure}(B)$. There exists a $G$-covariant channel mapping $\rho$ to $\psi$ if and only if
    \begin{align}
      H_{\psi_{B}} (\rho_A) \le  H_{\psi_{B}} (\psi_{B}),
    \end{align}
 where we have defined
\begin{align}
    H_{\psi_R} (\rho_S) \coloneqq H_{\min}(R|S)_{\Pi^G(\psi_R^T \otimes \rho_S)}.
\end{align}
\end{proposition}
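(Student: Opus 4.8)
The plan is to prove both directions simultaneously by identifying the optimal fidelity of $G$-covariant state conversion $\rho_A\to\psi_B$ with the single min-entropy $H_{\psi_B}(\rho_A)$, and then reading the stated inequality as the condition for that fidelity to saturate at one. Necessity is essentially free: if a $G$-covariant channel maps $\rho_A$ to $\psi_B$ then $\rho\Gmaj\psi$, so by the complete family of monotones recalled above (from Ref.~\cite{gour2018quantum}) we have $H_\eta(\rho)\le H_\eta(\psi)$ for \emph{every} state $\eta$ of $R\cong B$; specializing to $\eta=\psi_B$ gives the claimed bound. The substance of the proposition is the sufficiency direction, namely that this one monotone already certifies convertibility, and this is what the SDP argument below is designed to deliver.

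For sufficiency I would study the optimal overlap with the pure target achievable by a free operation,
\[
  f^\star \coloneqq \max_{\E\in\O_G}\tr\!\big[\psi_B\,\E(\rho_A)\big].
\]
Since $\psi_B$ is pure, $\rho\Gmaj\psi$ holds if and only if $f^\star=1$, because fidelity one with a pure state forces $\E(\rho_A)=\psi_B$ exactly. I would then express $f^\star$ as a semidefinite program over the Choi operator $J_{AB}$ of $\E$, with constraints $J\ge 0$, $\tr_B J=\id_A$, and $G$-covariance, the last being equivalent to invariance of $J$ under the twirl $\Pi^G_{AB}$ generated by $\bar{\U}^g_A\otimes\U^g_B$. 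Because the objective operator may be replaced by its twirl $W_{AB}\coloneqq\Pi^G(\rho_A^T\otimes\psi_B)$ without changing the value, and because twirling any feasible $J$ preserves both feasibility (the partial trace commutes with the twirl, so $\tr_B J=\id_A$ is maintained) and the objective, the covariance constraint can be dropped, leaving $f^\star=\max\{\tr[W_{AB}J]:J\ge 0,\ \tr_B J=\id_A\}$.

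The next step is to dualize. Introducing a Hermitian multiplier $Z_A$ for the trace-preservation constraint yields $f^\star=\min\{\tr Z_A: Z_A\otimes\id_B\ge W_{AB}\}$, with strong duality guaranteed by strict primal feasibility (e.g. the uniformly depolarizing covariant channel). I would then match this to the min-entropy program $2^{-H_{\psi_B}(\rho)}=\min\{\tr X_A:\id_R\otimes X_A\ge\Sigma_{RA}\}$ with $\Sigma_{RA}=\Pi^G(\psi_R^T\otimes\rho_A)$, by taking the overall transpose of the dual constraint: transposition is positivity-preserving and interchanges each representation with its complex conjugate, carrying $W_{AB}$ precisely onto $\Sigma_{RA}$ after the relabeling $B\cong R$, while $\tr Z_A^T=\tr Z_A$. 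This produces the key identity $f^\star=2^{-H_{\psi_B}(\rho_A)}$. Applying the same identity to the trivial conversion $\psi_B\to\psi_B$, realized by the identity channel, gives $H_{\psi_B}(\psi_B)=0$, so the stated inequality becomes $H_{\psi_B}(\rho)\le 0$, i.e. $f^\star=2^{-H_{\psi_B}(\rho)}\ge 1$; together with $f^\star\le 1$ this forces $f^\star=1$ and hence $\rho\Gmaj\psi$.

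The main obstacle I anticipate is the bookkeeping in the dualization and the transpose--relabeling step: one must track the input- versus output-space representations $\bar{\U}^g_A$ and $\U^g_B$ carefully so that the twirled objective $W_{AB}$ of the covariant-conversion SDP lands exactly on the twirled operator $\Sigma_{RA}$ defining $H_{\psi_B}(\rho)$, including the placement of the complex conjugations and of the transpose on $\rho$ versus on $\psi$. Verifying Slater's condition for strong duality, and checking explicitly that the partial trace commutes with the twirl so that twirling a feasible Choi operator preserves trace preservation, are routine but worth stating to make the reduction airtight. Once the identity $f^\star=2^{-H_{\psi_B}(\rho_A)}$ is in hand, both directions of the proposition follow uniformly, and the somewhat surprising fact that a single monotone suffices is explained by $H_{\psi_B}$ attaining its minimal value $0$ at the pure target.
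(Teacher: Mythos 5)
Your proposal is correct and follows essentially the same route as the paper: the paper likewise reduces everything to the identity $F_{\psi_B}(\rho_A)=\max_{\E\in\O_G}\tr[\psi_B\,\E(\rho_A)]=2^{-H_{\min}(B|A)_{\Pi^G(\psi^T\otimes\rho)}}$ (established via the same twirl-and-dualize SDP manipulation, reproduced from Ref.~\cite{gour2024resources}), observes that $F_{\psi_B}(\psi_B)=1$ because the identity channel is free, and concludes that the single monotone condition forces $F_{\psi_B}(\rho_A)=1$ and hence exact convertibility. The only cosmetic difference is that the paper first proves the pure-target statement abstractly for any closed convex resource theory (its Proposition~\ref{prop:single_condition_pure}) and then specializes to asymmetry, whereas you carry out the argument directly in the $G$-covariant setting.
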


This theorem follows as a consequence of an identity initially proven in Supplementary Material 9 of Ref.~\cite{Marvian2020CoherenceDistillation}. A proof can also be found in \appref{appx:proofs_of_theorems}.

Proposition~\ref{prop:asymmetry} offers a quadratic computational reduction of the SDP search space from $(d_A d_B)^2$ to $d_A d_B$ compared to the complete SDP conditions for asymmetry theory presented in Ref.~\cite{gour2018quantum}. However, the real benefit of this condition lies in it's conceptual simplicity. That is, for transitions to any rank-one pure state $\psi$ this proposition provides a \textit{single entropic quantity} which determines the state conversion problem within the resource theory of asymmetry. This is rather nice, as single resource monotones are few and far between in quantum resource theories~\cite{Datta2023FiniteMonotones}.

One potential way of interpreting Proposition~\ref{prop:asymmetry} can be borrowed from Ref.~\cite{gour2018quantum} in terms of quantum reference frames. In particular, for the case of pure state conversion it tells us that an initial state $\ket{\psi^i}$ can be $G$-covariantly transformed into some final state $\ket{\psi^f}$ if and only if the ``time-reversed" version of $\ket{\psi^f}$ serves as a better reference frame for $\ket{\psi^i}$ than it does for itself.

 Let us consider an example wherein Proposition~\ref{prop:asymmetry} can be applied.

\begin{tcolorbox}[breakable,colback=darkgreen!5!white,colframe=darkgreen!0!white]
\begin{example}[Entropic constraints on evolution of the universe]
Let us consider the quantum system $S$ corresponding to the universe with Hamiltonian $H_S$ and an additional system $\tilde{S} \cong S$. Moreover, let us note the following axioms:
\begin{enumerate}
    \item[I.]  \label{axiom1} The state of the system $S$ of the universe at some initial time labelled by $t=0$ is described by a pure quantum state $\psi^{i}$.
    \item[II.] \label{axiom2} Quantum mechanics holds globally across $S$.
    \item[III.] \label{axiom3} The global dynamics of the universe are covariant with respect to time-translations.
\end{enumerate} If axioms I-III hold, then it follows immediately from Proposition~\ref{prop:asymmetry} that some state $\psi^{f}$ is a possible state of the universe $S$ for any later time $t>0$ if and only if 
\begin{align}
   H_{\psi^{f }_{\tilde{S}}}(\psi^f_S) \ge H_{\psi^{f }_{\tilde{S}}}(\psi^i_S) ,
\end{align}
where here
\begin{align}
    H_{\psi} (\rho_S) \coloneqq H_{\min}(R|S)_{\Pi^{U(1)}(\psi_R^T \otimes \rho_S)}.
\end{align}
and 
\begin{align}
    \Pi^{U(1)}(\cdot) &\coloneqq \int_{-\infty}^{\infty} dt \, \bar{\U}_t\otimes \U_t(\cdot), \notag \\
    \U_t(\cdot)& \coloneqq e^{-iHt}(\cdot)e^{iHt}.
\end{align}
\end{example}
\end{tcolorbox}

This example gives a single necessary and sufficient condition for global state transitions of the universe, given some commonly held assumptions. The above example tells us that if one were (somehow) able to find some states of the universe $\psi(0)_S$ and $\psi(t)_S$ such that $t>0$ and
    \begin{align} \label{subeq:entropy_universes}
    H_{\psi(t)}(\psi(t)) <   H_{\psi(t)}(\psi(0)),
\end{align}
we would be lead to reject at least one of the axioms I-III.

\subsection{Approximate purification}
\label{sec:approximate_single_state}

Here we consider some initial state $\rho_A$, and we seek to answer whether the pure target state $\psi_B $ is achievable using only $G$-covariant operations, up to some error tolerance $\varepsilon$. To formalize this, we will say that a state $\sigma_S$ is \textit{$\varepsilon$-close} to the target state $\psi_S$ of the same system $S$ whenever $F(\sigma , \psi) \ge 1- \varepsilon$. Now we seek to determine whether there exists some $G$-covariant channel $\E_{A \rightarrow B}$ and some state $\sigma_B$ which is $\varepsilon$-close to our target state $\psi_B$, such that
\begin{align}
    \E(\rho_A) = \sigma_B.
\end{align}
Whenever such a $G$-covariant map exists, we shall write 
\begin{align}
    \rho \stackrel{G}{\rightarrow}_\varepsilon \psi.
\end{align}
In this context, we have the following proposition, which again follows as a consequence of an identity initially proven in Supplementary Note 9 of Ref.~\cite{Marvian2020CoherenceDistillation}.

\begin{proposition}[\cite{Marvian2020CoherenceDistillation}]  \label{prop:approximate}
Let $0 \le \varepsilon \le 1$. Moreover, let us consider two quantum systems $A$, $B$, and two states $\rho_A \in \D(A)$ and $\psi_B \in \mathrm{pure}(B)$. There exists a $G$-covariant channel mapping $\rho$ to $\psi$ up to error $\varepsilon$ if and only if \begin{align} \label{eq:approximate}
H_{\mathrm{min}}(B|A)_{\Pi^G(\psi_{B}^T \otimes \rho_A)} \le  \log \frac{1}{ 1- \varepsilon}.\end{align}
\end{proposition}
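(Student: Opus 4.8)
The plan is to reduce the claim to a single identity between an optimal fidelity and the conditional min-entropy of \eqref{eq:approximate}. Concretely, I would introduce the best fidelity achievable under covariant dynamics,
\[
F^{\star}(\rho\to\psi) \coloneqq \sup\{F(\E(\rho_A),\psi_B) : \E_{A\to B}\ \text{is}\ G\text{-covariant}\},
\]
and argue that the proposition is equivalent to the identity $F^{\star}(\rho\to\psi)=2^{-H_{\min}(B|A)_{\Pi^G(\psi_B^T\otimes\rho_A)}}$. Once this is in hand we are done: by the definition of $\varepsilon$-closeness, $\rho\stackrel{G}{\rightarrow}_\varepsilon\psi$ is exactly the statement $F^{\star}\ge 1-\varepsilon$ (the supremum is attained by compactness in finite dimension), and $2^{-H_{\min}}\ge 1-\varepsilon$ rearranges precisely to $H_{\min}(B|A)_{\Pi^G(\psi_B^T\otimes\rho_A)}\le\log\frac{1}{1-\varepsilon}$.

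First I would recast $F^{\star}$ as a semidefinite program via the Choi representation. Since $\psi_B$ is pure, $F(\E(\rho),\psi)=\tr[\psi_B\E(\rho_A)]=\tr[(\rho_A^T\otimes\psi_B)J(\E)]$, so $F^{\star}$ is the linear maximization of $\tr[(\rho_A^T\otimes\psi_B)J]$ over Choi operators $J\ge 0$ subject to trace preservation $\tr_B J=\id_A$ and the $G$-invariance constraint equivalent to covariance of $\E$ (the conjugate representation $\bar{\U}^g$ carried by the reference is exactly what produces the twirl $\Pi^G$). On the other side, I would dualize the min-entropy through its SDP, $2^{-H_{\min}(B|A)_{\Sigma}}=\max\{\tr[Y\Sigma]:Y\ge 0,\ \tr_B Y\le\id_A\}$, and use self-adjointness of the twirl to replace $\Sigma=\Pi^G(\psi_B^T\otimes\rho_A)$ by $\psi_B^T\otimes\rho_A$ against a $G$-invariant $Y$, then match the two programs term by term under the transpose dictated by the conjugate representation.

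The necessity direction is then immediate: any covariant $\E$ with Choi $J$ achieving fidelity $\ge 1-\varepsilon$ supplies, after the transpose, a feasible $Y$ for the min-entropy dual with $\tr[Y\Sigma]=F(\E(\rho),\psi)\ge 1-\varepsilon$, forcing $2^{-H_{\min}}\ge 1-\varepsilon$. For sufficiency I would take an optimal dual $Y^{\star}$, twirl it to make it $G$-invariant without changing the objective, and restore the exact equality $\tr_B Y=\id_A$ by adding a suitable $G$-invariant positive correction (which can only increase the objective); transcribing back produces an explicit covariant channel with $F\ge 1-\varepsilon$. A preliminary lemma worth recording is that $H_{\min}(B|A)_{\Pi^G(\psi_B^T\otimes\rho_A)}\ge 0$, equivalently $F^{\star}\le 1$: testing the primal with $X=\Pi^G(\rho_A)$ and using $\psi_B^T\le\id_B$ gives $\Pi^G(\psi_B^T\otimes\rho_A)\le\id_B\otimes\Pi^G(\rho_A)$ with $\tr X=1$. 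This is exactly where purity of the target enters, and it guarantees consistency with the limit $\varepsilon=0$ reducing to Proposition~\ref{prop:asymmetry}.

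I expect the main obstacle to be the representation-theoretic bookkeeping in the SDP matching: tracking which system carries $\bar{\U}^g$, verifying that the $G$-twirl commutes with the partial-trace constraints so that invariance may be imposed on the dual optimizer without loss of generality, and justifying the saturation of the trace-preservation equality from the relaxed inequality $\tr_B Y\le\id_A$. These steps are routine but delicate, and they are precisely where the identity from Ref.~\cite{Marvian2020CoherenceDistillation} does the heavy lifting.
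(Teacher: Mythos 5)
Your proposal is correct and follows essentially the same route as the paper: the paper likewise reduces the statement to the trivial equivalence $\rho\stackrel{G}{\rightarrow}_\varepsilon\psi \Leftrightarrow \max_{\E\in\O_G}\tr[\psi_B\E(\rho_A)]\ge 1-\varepsilon$ and then invokes the identity $\max_{\E\in\O_G}\tr[\psi_B\E(\rho_A)]=2^{-H_{\mathrm{min}}(B|A)_{\Pi^G(\psi_B^T\otimes\rho_A)}}$, established by passing to the Choi matrix of the adjoint channel, twirling, and recognizing the dual SDP of the conditional min-entropy. Your derivation of that identity reads the same chain of equalities in the opposite direction, so the content is identical.
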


A proof of this theorem is provided in \appref{appx:proofs_of_theorems}.

The special case of this theorem with $\varepsilon=0$ is in fact equivalent to Proposition~\ref{prop:asymmetry}. To see this, note that $H_{\psi} (\psi)=0$ for any pure state $\psi$. With this in mind we can identify the term on the left-hand side of \eqref{eq:approximate} as an entropy production $\Delta H$ under the transition $\rho \rightarrow \psi$, namely
\begin{align}
    \Delta H &\coloneqq H_{\psi} (\rho) -H_{\psi} (\psi) =H_{\mathrm{min}}(B|A)_{\Pi^G(\psi_{B}^T \otimes \rho_A)}.
\end{align}

For finite values of the error parameter $\varepsilon$, the case where we take the input state $\rho$ to be a pure state $\psi^i$, where we consider transitions to pure states $\psi^f$ of the same system $A=B$, and the case where $G$ is a compact Lie group, this can be viewed as an approximate recasting of Noether's theorem~\cite{noether1918invarianten} to the case where some symmetry principle holds only approximately.

\subsection{Multi-state approximate purification}
\label{sec:multi-conversion}

Now let us generalize the known results we have seen so far to the case where we have a pair of ensembles $ \{\rho_\mu\}_{\mu \in \Lambda}$ and $  \{\psi_\mu\}_{\mu \in \Lambda}$ of states parameterized by some real-valued $\mu$ defined over some compact set $\Lambda$. Now we seek to answer the question of whether or not there exists a $G$-covariant quantum channel $\E $ such that
\begin{align}
    \E(\rho_\mu) = \sigma_\mu \text{ and }  F(\sigma_\mu,\psi_\mu) \le 1- \varepsilon,
\end{align}
for all $\mu \in \Lambda$. For example, $\Lambda$ could denote a two element set $\{0,1\}$ and the question becomes whether or not there exists a $G$-covariant channel $\E$ such that $\E( \rho_0 ) $ is $\varepsilon$-close to $\psi_0$ and $\E( \rho_1 ) $ is $\varepsilon$-close to $\psi_1$. More generally, the set $\Lambda$ could parameterize some continuum of states.

In this setting, we have the following new result.

\begin{lemma}
\label{lem:multi}
Let $0\le \varepsilon \le 1$. Moreover, let $ \{\rho_\mu\}_{\mu \in \Lambda}$ and $ \{\psi_\mu\}_{\mu \in \Lambda}$ be two sets of states such that $\rho_\mu \in \D(A)$ and $\psi_\mu \in \mathrm{pure}(B)$ for all $\mu \in \Lambda$, where $\Lambda$ is a compact set. There exists a $G$-covariant channel mapping $\rho_\mu$ to $\psi_\mu$ up to error $\varepsilon$ for all $\mu \in \Lambda$ if and only if 
\begin{align}
\max_{\{p(\mu)\}} H_{\mathrm{min}}(B|A)_{\Sigma_{BA}^p} \le \log\frac{1}{1- \varepsilon},
\end{align}
where
\begin{align}
\Sigma_{BA}^p &\coloneqq \Pi^{G} \left( \int_\Lambda d\mu \, p(\mu) \psi_\mu^T \otimes \rho_\mu \right),
\end{align}
and where the maximization is performed all probability density functions $ p(\mu) $ over $\Lambda$ such that $ \int_\Lambda d\mu \, p(\mu) = 1$ and $p(\mu) \in \mathbb{R}^+$ for all $\mu \in \Lambda$.
\end{lemma}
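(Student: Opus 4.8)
The plan is to recognize that the existence of a \emph{single} $G$-covariant channel accomplishing all the approximate transformations simultaneously is a convex feasibility problem, to turn the universal quantifier over $\mu$ into a maximization over probability weights $p(\mu)$ by a minimax argument, and then to evaluate each fixed-$p$ subproblem using exactly the semidefinite identity underlying Proposition~\ref{prop:approximate}. First I would exploit that the targets $\psi_\mu$ are pure to linearize the fidelity constraint, $F(\E(\rho_\mu),\psi_\mu)=\tr[\psi_\mu\,\E(\rho_\mu)]$, which is affine in $\E$. Hence the admissible channels form the intersection over $\mu\in\Lambda$ of the convex, compact set $\O_G$ of $G$-covariant channels with the half-spaces $\{\tr[\psi_\mu\,\E(\rho_\mu)]\ge 1-\varepsilon\}$, so existence of an admissible channel is equivalent to
\[
\max_{\E\in\O_G}\ \min_{\mu\in\Lambda}\ \tr[\psi_\mu\,\E(\rho_\mu)] \ \ge\ 1-\varepsilon,
\]
with the maximum attained by compactness.

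Next I would replace the pointwise minimum over $\mu$ by a minimum over probability measures $p$ on $\Lambda$ (legitimate because the objective is affine under $\mu$-averaging, so its minimum over point masses equals its infimum over mixtures) and then apply Sion's minimax theorem to exchange $\max_\E$ and $\min_p$. This exchange is the step I expect to be the main obstacle: it requires the objective $(\E,p)\mapsto\int_\Lambda d\mu\,p(\mu)\,\tr[\psi_\mu\,\E(\rho_\mu)]$ to be concave (here linear) in $\E$ and convex (linear) in $p$, the channel set to be convex and compact (automatic in finite dimension), and the set of probability measures on $\Lambda$ to be convex and weak-$*$ compact, which is precisely where compactness of $\Lambda$ enters. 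It also needs mild regularity, namely that $\mu\mapsto\rho_\mu,\psi_\mu$ be continuous, so that the averaged objective is continuous in $p$; I would state this as a standing assumption. One small subtlety is that the extremal $p$ may be a point mass rather than an honest density, so the maximization in the statement should be understood over all probability measures on $\Lambda$, with densities recovered by an approximation argument.

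Then, for each fixed $p$, I would evaluate the inner maximum $\max_{\E\in\O_G}\int_\Lambda d\mu\,p(\mu)\,\tr[\psi_\mu\,\E(\rho_\mu)]$ by passing to the Choi operator $J(\E)$, so the objective becomes $\tr[M^p J(\E)]$ with $M^p\propto\int_\Lambda d\mu\,p(\mu)\,\rho_\mu^T\otimes\psi_\mu$. Since $G$-covariance forces $J(\E)$ to be invariant under the same twirl $\Pi^G$ that defines $\Sigma^p_{BA}$, I may replace $M^p$ by its $G$-twirl, which after matching the Choi convention is exactly $\Sigma^p_{BA}$; moreover the covariance constraint can be dropped for free, since twirling any optimal Choi operator preserves positivity and the trace-preservation condition $\tr_B J=\id_A$ while leaving $\tr[\Sigma^p J]$ unchanged. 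Dualizing the semidefinite program $\max\{\tr[\Sigma^p J]:J\ge 0,\ \tr_B J=\id_A\}$ then yields $\inf_{X_A\ge 0}\{\tr X_A:\id_B\otimes X_A\ge\Sigma^p\}=2^{-H_{\min}(B|A)_{\Sigma^p}}$, which is Proposition~\ref{prop:approximate} generalized from a single input--output pair to a $p$-weighted mixture (the same SDP identity applies verbatim, and strong duality holds by Slater's condition).

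Finally I would chain the equalities: admissibility holds iff $\max_\E\min_\mu\tr[\psi_\mu\,\E(\rho_\mu)]\ge 1-\varepsilon$, and by the minimax exchange together with the fixed-$p$ evaluation this quantity equals $\min_p 2^{-H_{\min}(B|A)_{\Sigma^p}}=2^{-\max_p H_{\min}(B|A)_{\Sigma^p}}$. Rearranging, the condition becomes $\max_p H_{\min}(B|A)_{\Sigma^p}\le\log\frac{1}{1-\varepsilon}$, which establishes both directions of the lemma simultaneously as a single exact identity.
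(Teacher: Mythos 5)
Your proposal is correct and follows essentially the same route as the paper's proof: linearize the fidelity constraint using purity of the targets, replace the universal quantifier over $\mu$ by a minimum over probability measures on the compact set $\Lambda$ (delta measures being the extreme points), invoke Sion's minimax theorem to exchange the optimizations, and evaluate the inner maximum over $G$-covariant channels via the twirled-Choi SDP identity $\max_{\E\in\O_G}\tr[\psi_B\,\E(\rho_A)]=2^{-H_{\min}(B|A)_{\Pi^G(\psi^T\otimes\rho)}}$. The only cosmetic difference is that the paper factors the argument through a general statement for closed, compact, convex resource theories (Proposition~\ref{prop:multi}) before specializing to asymmetry, whereas you work with $\O_G$ directly; your added remarks on continuity of $\mu\mapsto(\rho_\mu,\psi_\mu)$ and on point masses versus densities are reasonable regularity caveats that the paper leaves implicit.
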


A proof of this lemma can be found in \appref{appx:proofs_of_theorems}.

In the next section, we will see the utility of the above result in the context of quantum error correction, where we seek decoders whose performance needs to be assessed when applied to a continuous set of states, rather than simply a single state of the logical system.

\section{Unitarily covariant quantum codes}
\label{sec:QEC}

Here, we leverage the resource-theoretic findings of the previous section to establish a new approximate Eastin-Knill theorem. The basic idea is sketched in \figref{fig:QEC_asymmetry_distillation} and we outline it as follows.
Given an encoding channel $\E_{L\rightarrow P}$ which is covariant with respect to some group $G$, and $G$-covariant noise channel $\N_{P \rightarrow P'}$, it is known \cite{Zhou2021newperspectives} that the optimal decoder $\D_{P' \rightarrow L}$ can also be assumed to be $G$-covariant. Therefore, in such cases the existence of such a decoder which operates up to infidelity $\varepsilon$ is equivalent to the problem of whether or not
\begin{align}
(\N \circ \E \circ \psi) \stackrel{G}{\rightarrow}_\varepsilon \psi,
\end{align}
for all pure states $\psi$ on the logical space. This was precisely the problem tackled in \secref{sec:multi-conversion}. By specializing to the group $G=U(d)$, here we find that \lemref{lem:multi} can be simplified considerably to provide a new approximate Eastin-Knill theorem.

\subsection{Approximate unitarily covariant codes}

 As shown in Lemma~2 of Ref.~\cite{Zhou2021newperspectives} under $G$-covariant encoding and noise, the optimal decoder can always be assumed to also be $G$-covariant. For completeness, in \appref{app:cov_decoder_optimality} we provide an analagous proof for our precise definition of $\varepsilon$-correcting codes given in \eqref{eq:epsilon_correctability}. 
By drawing on this result and the complete asymmetry distillation conditions given in \lemref{lem:multi} for multi-state asymmetry purification, we can now present the following necessary and sufficient condition for the existence of a recovery operation associated with a $U(d_L)$-covariant code $\E_{L \rightarrow P}$ which corrects for $U(d_L)$-covariant noise $\N_{P\rightarrow P'}$ up to error $\varepsilon$.

\begin{restatable}{theorem}{Udcovariant}
\label{thm:Ud_covariant} Let $\E_{L\rightarrow P}$ be an encoder which is $U(d_L)$-covariant (with respect to representations $\U^g_L$ and $\U^g_P$), and let $\N_{P \rightarrow P'}$ be a $U(d_L)$-covariant noise channel (with respect to representations $\U^g_P$ and $\U^g_{P'}$ on $P$ and $P'$). The encoder $\E_{L \rightarrow P}$ is $\varepsilon$-correctable under the noise $\N_{P \rightarrow P'}$ if and only if
\begin{align}
    H_{\mathrm{min}}(L|P')_{J(\N \circ \E)} \le  - \log d_L (1-c \, \varepsilon ),
\end{align}
where $c  \coloneqq \frac{d_L + 1}{d_L}$ and  $ J (\Phi_{L \rightarrow P'}) \coloneqq d_L^{-1} \mathrm{id}_L \otimes \Phi_{\tilde{L} \rightarrow P'}  \sum_{i,j} \ketbra{ii}{jj}_{L \tilde{L}}$,
denotes the Choi state of the channel $\Phi_{L \rightarrow P'}$.
\end{restatable}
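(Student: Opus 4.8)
The plan is to feed the problem into the multi-state purification criterion of \lemref{lem:multi} and then exploit the transitivity of $U(d_L)$ on pure states to collapse the maximization. First I would reduce $\varepsilon$-correctability to a covariant purification statement: since $\E$ and $\N$ are covariant, their composite $\N\circ\E$ is $U(d_L)$-covariant with respect to $\U^g_L$ and $\U^g_{P'}$ (cf.\ \figref{fig:QEC_asymmetry_distillation}), and by \lemref{lem:covariant_decoder_optimal} (proved for our definition in \appref{app:cov_decoder_optimality}) the optimal recovery $\D_{P'\rightarrow L}$ may be taken covariant. Hence \eqref{eq:epsilon_correctability} holds exactly when there is a covariant $\D$ realizing $(\N\circ\E)(\psi_L)\Gmaj_\varepsilon\psi_L$ for \emph{every} pure $\psi_L$. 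This is precisely the hypothesis of \lemref{lem:multi} with $\Lambda$ the set of pure states of $L$, input ensemble $\rho_\mu=(\N\circ\E)(\psi_\mu)$ on $P'$, and pure target ensemble $\{\psi_\mu\}$ on $L$, giving the equivalence with $\max_{\{p(\mu)\}}H_{\mathrm{min}}(L|P')_{\Sigma^p_{LP'}}\le\log\frac{1}{1-\varepsilon}$.

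Next I would show the maximization is vacuous for $G=U(d_L)$. Parameterising pure states as $\psi_g=\U^g_L(\psi_0)$ about a fixed reference $\psi_0$ and using covariance, $\psi_g^T\otimes(\N\circ\E)(\psi_g)=(\bar\U^g_L\otimes\U^g_{P'})\big(\psi_0^T\otimes(\N\circ\E)(\psi_0)\big)$. Because the global twirl $\Pi^G$ is invariant under $\bar\U^h_L\otimes\U^h_{P'}$, pushing a density $p$ forward by any $h\in U(d_L)$ leaves $\Sigma^p$ unchanged; averaging over the group then gives $\Sigma^p=\Sigma^{\bar p}$ with $\bar p$ the $G$-average of $p$. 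Since $U(d_L)$ acts transitively on pure states, $\bar p$ is the uniform (Haar-induced) measure for \emph{every} $p$, so $\Sigma^p$ is independent of $p$ and equals $\Sigma^{\mathrm{Haar}}:=\Pi^G\big(\psi_0^T\otimes(\N\circ\E)(\psi_0)\big)$. The criterion thus reduces to the single quantity $H_{\mathrm{min}}(L|P')_{\Sigma^{\mathrm{Haar}}}$.

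I would then evaluate $\Sigma^{\mathrm{Haar}}$ via Schur--Weyl. Taking $\psi_0=\ketbra{0}{0}$, the residual twirl is over $\bar\U^g_L\otimes\U^g_{\tilde L}$ applied to $\ketbra{00}{00}_{L\tilde L}$; this representation splits into the one-dimensional trivial component spanned by the maximally entangled vector $\ket{\Phi^+}=d_L^{-1/2}\sum_i\ket{ii}$ and the $(d_L^2-1)$-dimensional adjoint component, so the twirl equals $\tfrac{1}{d_L}\ketbra{\Phi^+}{\Phi^+}+\tfrac{1}{d_L(d_L+1)}\big(\id-\ketbra{\Phi^+}{\Phi^+}\big)$. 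Applying $\N\circ\E$ to the $\tilde L$ register turns $\ketbra{\Phi^+}{\Phi^+}$ into the Choi state $J(\N\circ\E)$ and $\id_{\tilde L}$ into $d_L\,\tau_{P'}$ with $\tau_{P'}:=(\N\circ\E)(\id_{\tilde L}/d_L)$, and collecting terms yields the clean form $\Sigma^{\mathrm{Haar}}=\tfrac{1}{d_L+1}\big(J(\N\circ\E)+\id_L\otimes\tau_{P'}\big)$.

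Finally I would connect $H_{\mathrm{min}}(L|P')_{\Sigma^{\mathrm{Haar}}}$ to $H_{\mathrm{min}}(L|P')_{J}$ at the level of the defining SDP. In $\inf\{\tr X:\id_L\otimes X_{P'}\ge\Sigma^{\mathrm{Haar}}\}$ the substitution $X=\tfrac{1}{d_L+1}(Y+\tau_{P'})$ is a bijection of feasible sets onto $\{Y:\id_L\otimes Y\ge J\}$ — feasibility already forces positivity since $\Sigma^{\mathrm{Haar}},J\ge 0$ — giving $2^{-H_{\mathrm{min}}(L|P')_{\Sigma^{\mathrm{Haar}}}}=\tfrac{1}{d_L+1}\big(2^{-H_{\mathrm{min}}(L|P')_J}+1\big)$. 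Inserting this into the threshold $\log\frac{1}{1-\varepsilon}$ and rearranging with $(d_L+1)(1-\varepsilon)-1=d_L(1-c\varepsilon)$, where $c=(d_L+1)/d_L$, produces exactly $H_{\mathrm{min}}(L|P')_{J(\N\circ\E)}\le-\log d_L(1-c\varepsilon)$, and since every step is an equivalence this establishes the stated ``if and only if''. I expect the main obstacle to lie in the middle two steps: rigorously justifying that transitivity of $U(d_L)$ makes the maximization over $p$ vacuous, and carrying the twirl/Choi evaluation and SDP reparametrisation through with exact bookkeeping of the $\tau_{P'}$ term, since any slip in the normalisation constants would destroy the precise matching of the error parameter $\varepsilon$.
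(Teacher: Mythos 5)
Your proposal is correct and follows essentially the same route as the paper: reduce to a covariant decoder via \lemref{lem:covariant_decoder_optimal}, apply \lemref{lem:multi} to the $U(d_L)$-orbit of a fixed pure state, use Haar invariance to show $\Sigma^p$ is independent of $p$, evaluate the residual twirl as a mixture of $\id_{L\tilde L}$ and the maximally entangled state, and convert $H_{\mathrm{min}}$ of that mixture into $H_{\mathrm{min}}$ of the Choi state. The only differences are presentational — you derive the twirl via the commutant decomposition and re-derive the affine identity $\Phi_{L|P'}(\lambda_1\id\otimes\tau + \lambda_2 J) = \lambda_1\tr[\tau] + \lambda_2\Phi_{L|P'}(J)$ by an explicit SDP substitution, whereas the paper cites these as Example~49 of Ref.~\cite{Mele2024introductiontohaar} and \lemref{lemma:Hmin_identity} respectively — and your constants all check out.
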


A proof of \thmref{thm:Ud_covariant} can be found in  \appref{apx:proof_lemma_Ud_cov_encoder}.

\subsection{Covariant noise}

We seek to exploit \thmref{thm:Ud_covariant} to derive a new approximate Eastin-Knill theorem. To do so, we first need to establish which noise channels are in fact $U(d)$-covariant. To this effect, first let us define the following erasure channel $\N_{P \rightarrow P'}$, for $P= P_1 P_2 \dots P_n$ and $P'=P_1 \dots P_{j}$ 
\begin{align} \label{eq:erasure_noise}
    \N_{P \rightarrow P'}^{\mathrm{erasure}} (\cdot) \coloneqq \tr_{P_{j+1}, \dots, P_n}(\cdot),
\end{align}
which can be viewed as the known erasure channel considered in~\cite{Faist2020ContinuousSymmQEC} followed by discarding the information regarding which subsystem was erased. Without loss of generality we assume that the final $n-j-1$ subsystems were erased, since this amounts to simply a relabeling of qubits. On the other hand, we can define the partially depolarizing channel $\N_{P }^{(p)}$ for any probability-valued $p$ as
\begin{align} \label{eq:partially_depolarizing}
    \N_{P }^{(p)} (\cdot) \coloneqq (1-p)(\cdot) + p \frac{\id_P}{d_P} ,
\end{align}
which can be implemented by completely depolarizing with probability $p$ and doing nothing with probability $1-p$.

Importantly for us, the noise channels appearing in Eqs.~(\ref{eq:erasure_noise}) and (\ref{eq:partially_depolarizing}) are covariant with respect to any group $G$, and therefore we will see that \thmref{thm:Ud_covariant} immediately applies. Let us first formalize this in the following lemma.

\begin{lemma} \label{lemma:covariant_noise} Let us consider two systems $P=P_1 \dots P_n$ and $P' = P_1 \dots P_j$ of $n$ and $j$ subsystems respectively, and let $G$ be a compact group. We note the following:
 \begin{enumerate}
     \item \label{item:erasure} \textit{(Unknown erasure).} Here we have $j \le n$. Unknown erasure noise $\N_{P \rightarrow P'}^{\mathrm{erasure}}$ as given in \eqref{eq:erasure_noise} is covariant with respect to the group $G$ and tensor product unitary representations $\U_P^g\coloneqq \bigotimes_{i=1}^n \U_{P_i}^g$ and $\U_{P'}^g \coloneqq \bigotimes_{i=1}^j \U_{P_i}^g$ of $G$.
     \item \label{item:depolarizing} \textit{(Partially depolarizing noise).} Here we have $j=n$, that is, $P=P'$. Partially depolarizing noise $ \N_{P }^{(p)}$ as given in \eqref{eq:partially_depolarizing} is covariant with respect to the group $G$ and arbitrary unitary representation $\U_P^g$ of $G$, for all $p \in [0,1]$.
 \end{enumerate}   
\end{lemma}
\begin{proof}[Proof of \ref{item:erasure}]We first note the following commutation relation for the partial trace: $\tr_A \circ \U_{A} \otimes \V_B =  \V_B \circ \tr_A$, which holds for any pair of local unitary channels $\U_A$ and $\V_B$. Making use of this identity, we obtain for all $g \in G$
    \begin{align}
       \N_{P \rightarrow P'}^{\mathrm{erasure}} \circ \bigotimes_{i=1}^n \U_{P_i}^g &= \tr_{P_{j+1}, \dots, P_n} \circ \bigotimes_{i=1}^n \U_{P_i}^g \notag \\
       &= \bigotimes_{i=1}^j \U_{P_i}^g \circ \tr_{P_{j+1}, \dots, P_n} \notag \\
       &= \bigotimes_{i=1}^j \U_{P_i}^g \circ  \N_{P \rightarrow P'}^{\mathrm{erasure}}.
    \end{align}
This confirms that unknown erasure is indeed $G$-covariant with respect to adjoint unitary representations of tensor product form, completing the proof. \end{proof}
\begin{proof}[Proof of \ref{item:depolarizing}]
For any probability $p$ we straightforwardly have for all $g \in G$
\begin{align}
 \U^g_P \circ  \N_P^{(p)}(\cdot) &= (1-p )\U^g_P (\cdot) + p \frac{\id_P}{d_P} \notag  \\ &= \N_P^{(p)} \circ \U^g_P(\cdot),
\end{align}
and therefore partially depolarizing noise is indeed $G$-covariant with respect to arbitrary adjoint unitary representation on $\H_P$.\end{proof}

\subsection{Approximate Eastin-Knill theorem}

We now present our main result which provides a single necessary \textit{and sufficient} (and SDP computable) condition on the existence of approximate quantum error correcting codes supporting a universal transversal set of gates. In particular, combining \thmref{thm:Ud_covariant} and \lemref{lemma:covariant_noise} produces the following corollary. 

\begin{tcolorbox}[breakable,colback=blue!5!white,colframe=blue!0!white]
\begin{corollary}[Approximate Eastin-Knill theorem] \label{cor:approx_eastin_knill}  Let $P \coloneqq P_1 \dots P_n$ be a quantum system with $n$ subsystems, $P_{/j } \coloneqq P_1 \dots P_{j-1} P_{j+1} \dots P_n$, and let $L$ be another quantum system. Any code $\E_{L \rightarrow P}$ admits a transversal implementation of the full unitary group and is $\varepsilon$-correctable with respect to erasure of the $j$th subsystem if and only if 
\begin{align} \label{eq:approx_EK}
  H_{\mathrm{min}}(L|P_{/j}&)_{J(\tr_{P_j} \circ \E)} \le  - \log d_L(1 -c \varepsilon),
\end{align}
where $c  \coloneqq \frac{d_L + 1}{d_L}$ and  $ J (\Phi_{L \rightarrow P}) \coloneqq d_L^{-1} \mathrm{id}_L \otimes \Phi_{\tilde{L} \rightarrow P}  \sum_{i,j} \ketbra{ii}{jj}_{L \tilde{L}}$
denotes the Choi state of the channel $\Phi_{L \rightarrow P}$.
\end{corollary}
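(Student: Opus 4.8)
The plan is to obtain the corollary as a direct specialization of \thmref{thm:Ud_covariant} to single-subsystem erasure noise, so that essentially no new technical work is required beyond correctly matching up the group representations. First I would invoke the hypothesis that $\E_{L \rightarrow P}$ admits a transversal implementation of the full unitary group. By the equivalence recalled in \secref{sec:preliminaries} (following Ref.~\cite{Faist2020ContinuousSymmQEC}), this is the same as saying that $\E$ is $U(d_L)$-covariant. The point to keep in mind is that the transversality condition \eqref{eq:commute} forces the physical-side representation to be of tensor-product form, $\U^g_P = \bigotimes_{i=1}^n \U^g_{P_i}$, since the transversal gates act as $\bigotimes_i \U^i_{P_i}$ on $P$.

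Next I would identify the noise channel as $\N_{P \rightarrow P_{/j}} = \tr_{P_j}$, the erasure of the $j$th subsystem. Up to a relabeling of the subsystems that places the erased factor last, this is exactly the unknown-erasure channel of \eqref{eq:erasure_noise} with a single erased subsystem, so \lemref{lemma:covariant_noise}(\ref{item:erasure}) applies directly: $\tr_{P_j}$ is $U(d_L)$-covariant with respect to $\U^g_P = \bigotimes_{i=1}^n \U^g_{P_i}$ on the input and $\U^g_{P_{/j}} = \bigotimes_{i \neq j} \U^g_{P_i}$ on the output. Crucially, the input representation here is precisely the tensor-product representation that universal transversality forced on $P$ in the previous step, so the output representation of $\E$ agrees with the input representation required by the erasure channel. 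This compatibility is exactly what licenses composing the two covariant maps.

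With both covariance hypotheses in hand, I would apply \thmref{thm:Ud_covariant} with $\N = \tr_{P_j}$ and $P' = P_{/j}$. The theorem then states that $\E$ is $\varepsilon$-correctable under $\tr_{P_j}$ if and only if $H_{\mathrm{min}}(L|P_{/j})_{J(\tr_{P_j} \circ \E)} \le -\log d_L(1 - c\,\varepsilon)$, which is precisely \eqref{eq:approx_EK}. Because both the transversality--covariance equivalence and \thmref{thm:Ud_covariant} are themselves biconditionals, chaining them preserves the ``if and only if''.

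The main obstacle is not a genuine calculation but rather careful bookkeeping of the representations: one must verify that the tensor-product representation forced on $P$ by universal transversality is the same representation under which single-subsystem erasure is covariant, both on the input, $\bigotimes_{i} \U^g_{P_i}$, and, after tracing out $P_j$, on the reduced output system, $\bigotimes_{i \neq j} \U^g_{P_i}$. Once this compatibility is confirmed, the corollary follows immediately, with $c = \frac{d_L+1}{d_L}$ and the Choi state $J(\tr_{P_j} \circ \E)$ of the composed encoding-and-erasure channel carried over unchanged from \thmref{thm:Ud_covariant}.
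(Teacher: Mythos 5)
Your proposal is correct and follows essentially the same route as the paper's own proof: invoke \lemref{lemma:covariant_noise} to establish $U(d_L)$-covariance of $\tr_{P_j}$ with respect to the tensor-product representation, observe that universal transversality makes $\E$ covariant with respect to that same representation on $P$, and then apply \thmref{thm:Ud_covariant} with $P' = P_{/j}$. Your explicit emphasis on matching the representations on the input and output of the composition is exactly the bookkeeping the paper performs in \eqref{eq:products}.
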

\end{tcolorbox}

\begin{proof}
By \lemref{lemma:covariant_noise}.1 erasure of the subsystem $P_j$, represented by the noise channel $\N_{P \rightarrow P_{/j}}\coloneqq \tr_{P_j}$ is $U(d_L)$-covariant with respect to the tensor product unitary channel representations $\V_P^g\coloneqq \bigotimes_{i=1}^n \V_{P_i}^g$ and $\V_{P'}^g \coloneqq \bigotimes_{i=1}^{n-1} \V_{P_i}^g$, where without loss of generality we have assumed that final subsystem is erased (solely to simplify notation). Let $\E_{L \rightarrow P}$ be a $U(d_L)$-covariant channel with respect to unitary representations $\V_L^g$ and \textit{the same} tensor product representation $\V_P^g$ of $U(d_L)$, such that 
\begin{align} \label{eq:products}
   \E_{L \rightarrow P} \circ \V_L^g =  \bigotimes_{i=1}^n \V_{P_i}^{g } \circ  \E_{L \rightarrow P},
\end{align}
for all $g \in U(d_L)$. \eqref{eq:products} encompasses the case where $\E_{L \rightarrow P}$ forms a transversal encoding of the full unitary group. Moreover, this is precisely the setting of \thmref{thm:Ud_covariant}. It follows immediately from \thmref{thm:Ud_covariant} that the channel $\E_{L \rightarrow P}$ is $\varepsilon$-correctable against the noise channel $\N_j \coloneqq \tr_{P_j}$ if and only if 
\begin{align} \label{subeq:approx_single_erasure}
 H_{\mathrm{min}}(L|P_{/j})_{J(\tr_{P_j} \circ \E)} \le - \log d_L(1 -c \varepsilon).
\end{align}
which is equivalent to the condition in \eqref{eq:approx_EK}, as claimed.
\end{proof}

We highlight that this result also can readily be formulated in terms of the erasure of any number $m<n$ physical subsystems, as the proof remains unchanged.

Let us see how Corallary~\ref{cor:approx_eastin_knill} encodes the exact Eastin-Knill theorem~\cite{EastinKnill2009Restrictions} whenever we have $\varepsilon = 0$. 
For any quantum state $\Omega_{LP'}$ we have the following global lower bound on the conditional min-entropy of this state (e.g. see Lemma 4.2 of Ref.~\cite{tomamichel2013thesis}):
\begin{align} \label{eq:hmin_lower_bound}
    H_{\mathrm{min}}(L|P')_{\Omega_{LP'}} \ge -\log d_L.
\end{align}
Now, the Choi state $J(\tr P_j \circ \E)$ is always a valid quantum state on $LP_{/j}$ for every $j \in [n]$. Therefore, combining the lower bound in \eqref{eq:hmin_lower_bound} with the upper bound in \eqref{eq:approx_EK} (with $\varepsilon =0$) we find that any code $\E_{L \rightarrow P}$ admits a transversal implementation of the full unitary group and can perfectly correct for the erasure of the $j$th subsystem if and only if
\begin{align}
    H_{\mathrm{min}}(L|P')_{J(\tr_{P_j} \circ \E)} = - \log d_L.
\end{align}
We can intuitively see that this reproduces the impossibility result of Eastin and Knill~\cite{EastinKnill2009Restrictions} for $d_L <\infty$\footnote{Throughout our analysis we have assumed that $U(d_L)$ is a compact group, for which we require $d_L < \infty$.}. 
In particular, if $\E$ corrects for local erasure errors and admits a universal transversal gateset, we would need $\max_{k \in [n]}  H_{\mathrm{min}}(L|P')_{J(\tr_{P_k} \circ \E)}$ equal to the global minimum value of $ H_{\mathrm{min}}(L|P')_{\Omega}$. This holds whenever $J(\tr_{P_k} \circ \E)$ is a maximally entangled state~(e.g. see \cite{Gour2024Inevitability}) for all $k \in [n]$, which we do not expect to hold since $\tr_{P_k}$ is an \textit{entanglement-breaking} channel.  

On the other hand, let us suppose the case of finite but very small $\varepsilon$. When we have an isometric encoder $\E_{L \rightarrow P}(\cdot) \coloneqq V_{L \rightarrow P} (\cdot) V^\dagger$ and the number of physical subsystems $n$ is very very large, we intuitively expect the condition in Corollary~\ref{cor:approx_eastin_knill} to pass. This is due to the fact that the conditional min entropy is invariant under local isometries, and furthermore the partial trace of one subsystem of a very large system isn't likely to break much of the global entanglement. This is in line with previous results in the literature, which have shown that one way to circumvent the Eastin-Knill theorem is to resort to infinite-dimensional codes~\cite{Hayden2021RefFrame,Faist2020ContinuousSymmQEC}.

\subsection{Example: W-state encoder}

Let us consider a simple example of the so-called $W$-state encoder first introduced in Ref.~\cite{Faist2020ContinuousSymmQEC}. This quantum error correcting code is manifestly covariant with respect to the full unitary group. In this code, the logical system of dimension $d \coloneqq d_L$ is mapped to a physical system with $n$ physical subsystems each of dimension $d+1$. In particular, for any pure state of the logical system $\ket{\psi}_L \coloneqq \sum_{i=0}^{d -1} c_i\ket{i}_L$, the encoding map is given by 
 $ \ket{\psi}_L \rightarrow \ket{\psi^{(n)}}_{P_1 \dots P_n}$, where
\begin{align}
\ket{\psi^{(n)}}  \coloneqq \frac{1}{\sqrt{n}} &(\ket{\psi,d, \cdots ,d } +   \notag \\ 
 &\ket{d,\psi,  \cdots ,d } +\dots  + \ket{d , \dots,d ,  \psi }),
\end{align}
where each physical subsystem forms a duplicate of the logical system with the extra basis vector $\ket{d}$.
Any logical gate $U$ can be implemented transversally at the encoded level by simply constructing the unitary $\tilde{U} \coloneqq U \oplus \id$, where $U$ acts on $\mathrm{span}\{\ket{i} \ | \ i = 1,\dots, d-1\}$ and $\id$ acts (as the identity) on $\mathrm{span}\{\ket{d}\}$) and applying $\tilde{U}^{\otimes n}$ to the encoded state $\ket{\psi^{(n)}}$.
\begin{figure}[t]
	\centering	\includegraphics[width=0.95\linewidth]{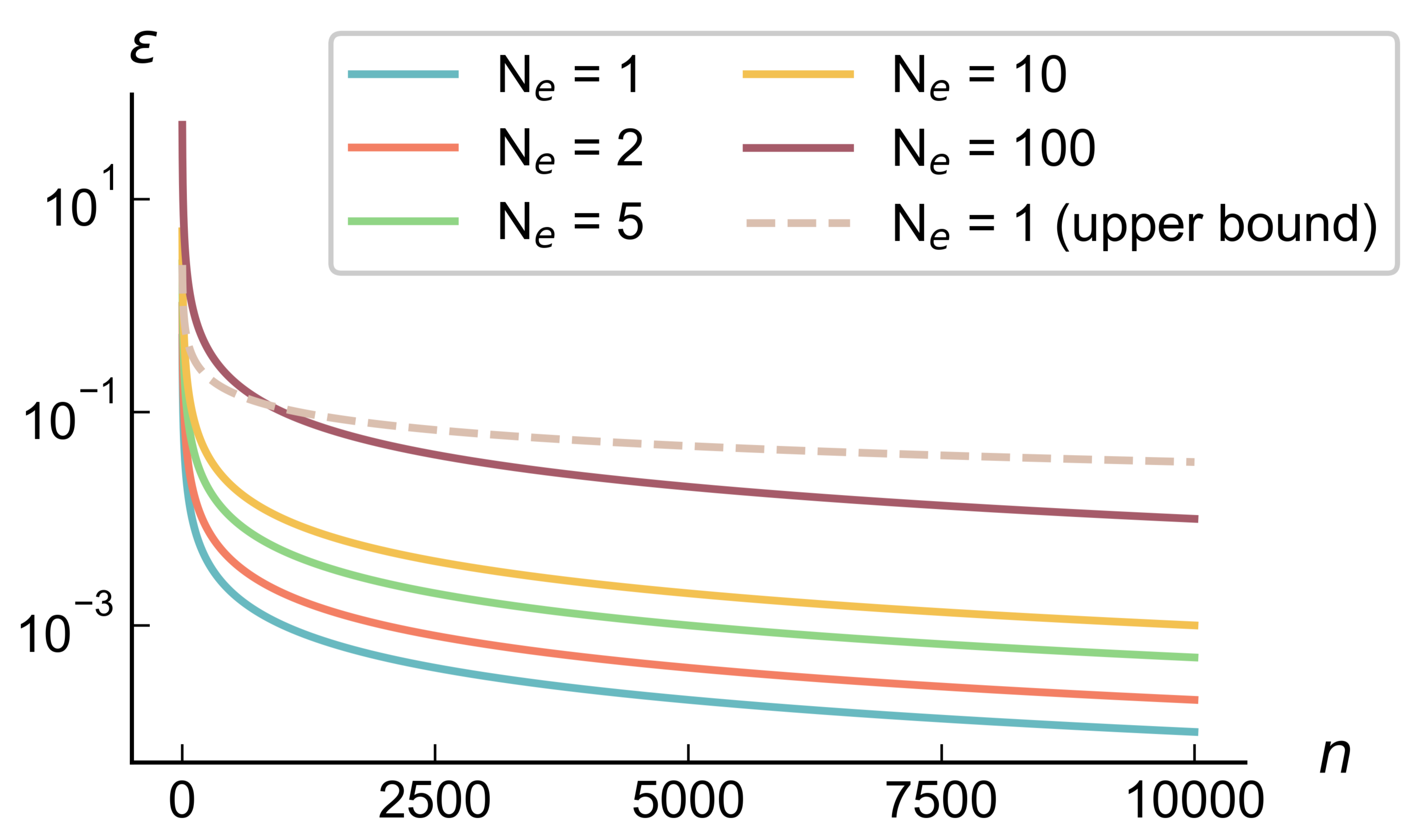}
\caption{\textbf{(Performance of the transversal W-state code)} The $W$-state code admits a transversal implementation of the full unitary group and can $\varepsilon$-correct for the erasure of $N_e$ physical subsystems. Here we plot the performance of this code with increasing number of physical subsystems $n$ in the limit where the number of encoded qubits $k = \log d_L \rightarrow \infty$ for different values of $N_e$, whose achievability is guaranteed by Corollary~\ref{corollary:W-code}. In the dashed line we plot the upper bound on the W-state code for the case of one encoded qubit ($k = 1$) provided in Ref.~\cite{Faist2020ContinuousSymmQEC}.
		\label{fig:Wstate}}
\end{figure} 

More explictly, the isometric encoder $\E^{(n)}$ for this code can be written as
\begin{equation}
\label{eq:W_state_encoder}
\E^{(n)}_{L \rightarrow P} \coloneqq V^{(n)}(\cdot) V^{(n) \dagger}, \ \ V^{(n)} \coloneqq \sum_i \ketbra{i^{(n)}}{i_L},
\end{equation}
where, for $i \in \{0,\dots, d-1 \}$,
\begin{align}
    \ket{i^{(n)}} \coloneqq \frac{1}{\sqrt{n}} &(\ket{i,d, \cdots ,d } +   \notag \\ 
 & \ket{d,i,  \cdots ,d } +\dots  + \ket{d , \dots,d ,  i }),
\end{align}
are the code words of the $W$-state code over $n$ physical qudits. For this code, which is $U(d)$-covariant, we have the following corollary of Corollary~\ref{cor:approx_eastin_knill}.
\begin{restatable}{corollary}{Wcode} \label{corollary:W-code} Let $n \in \mathbb{N}$. The W-state code $\E^{(n)}$ defined in \eqref{eq:W_state_encoder}, which admits a transversal implementation of the full unitary group, can $\varepsilon$-correct for the erasure of $N_e$ subsystems if and only if
    \begin{align} \label{eq:W_code_bound_epsilon}
    \varepsilon \ge \frac{N_e}{n} \left( 1 - \frac{1}{d_L}\right).
\end{align}
\end{restatable}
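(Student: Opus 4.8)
The plan is to invoke Corollary~\ref{cor:approx_eastin_knill}, together with the remark immediately following it that the same statement holds verbatim for erasure of any number $m<n$ of subsystems. This reduces the entire claim to a single computation: evaluating the conditional min-entropy $H_{\mathrm{min}}(L|P')_{J(\tr_E \circ \E^{(n)})}$ of the Choi state of the erasure-composed encoder, where $E$ denotes the erased subsystems and $P'$ the surviving register, and then solving the resulting inequality for $\varepsilon$. Because the condition in Corollary~\ref{cor:approx_eastin_knill} is an exact equivalence and all subsequent manipulations are reversible, establishing a closed form for this min-entropy delivers the ``if and only if'' automatically, and realizes the paper's claim that the governing SDP reduces to an analytic expression for this code.

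First I would compute the reduced Choi state. Since $\E^{(n)}$ is an isometry, its Choi state is the pure state $\ket{\Omega}=\tfrac{1}{\sqrt{d_L}}\sum_i\ket{i}_L\ket{i^{(n)}}_P$, and $J(\tr_E\circ\E^{(n)})=\tr_E\ketbra{\Omega}{\Omega}$. By the permutation symmetry of the $W$-state code I would take $E$ to be the last $N_e$ subsystems without loss of generality. The key structural fact is that each codeword $\ket{i^{(n)}}$ splits into a branch with its single excitation $\ket{i}$ on a surviving site and a branch with the excitation on an erased site; tracing out $E$ annihilates all cross terms (the surviving register reads $\ket{d\cdots d}$ in the erased-excitation branch, orthogonal to the one-excitation branch), leaving two orthogonal contributions on $P'$. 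Concretely I expect to obtain
\begin{align}
J(\tr_E\circ\E^{(n)})=\frac{n-N_e}{n}\,\ketbra{\hat\Psi}{\hat\Psi}+\frac{N_e}{n d_L}\,\id_L\otimes\ketbra{D}{D}_{P'},
\end{align}
where $\ket{D}_{P'}\coloneqq\ket{d\cdots d}_{P'}$ and $\ket{\hat\Psi}=\tfrac{1}{\sqrt{d_L}}\sum_i\ket{i}_L\ket{e_i}_{P'}$ is maximally entangled between $L$ and the $d_L$-dimensional one-excitation subspace of $P'$, with $\ket{e_i}$ the normalized $W$-state of $P'$ with excitation $\ket{i}$.

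The two contributions are supported on orthogonal subspaces of $P'$ (all-$\ket{d}$ versus one-excitation), so $J(\tr_E\circ\E^{(n)})$ is block diagonal in the conditioning register. I would exploit this to split the defining SDP $2^{-H_{\mathrm{min}}(L|P')}=\min\{\tr X_{P'}:\id_L\otimes X_{P'}\ge J\}$: pinching any feasible $X_{P'}$ onto the two blocks preserves its trace and its feasibility (since $J$ has no cross-block support), so the optimum separates into independent SDPs over each block, and any mass of $X_{P'}$ outside the two relevant subspaces can be discarded. The all-$\ket{d}$ block forces $X_{P'}\ge\tfrac{N_e}{nd_L}\ketbra{D}{D}$, contributing $\tfrac{N_e}{nd_L}$; the maximally entangled block contributes $\tfrac{n-N_e}{n}\,d_L$, which I would obtain either by scaling the standard value $H_{\mathrm{min}}=-\log d_L$ of a maximally entangled state, or from the (suitably scaled) pure-state identity $2^{-H_{\mathrm{min}}(L|P')_\psi}=(\sum_k\sqrt{p_k})^2$ for Schmidt coefficients $p_k$. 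Summing yields $2^{-H_{\mathrm{min}}(L|P')}=\tfrac{n-N_e}{n}d_L+\tfrac{N_e}{nd_L}$.

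Finally I would insert this into the Corollary~\ref{cor:approx_eastin_knill} condition, which reads $2^{-H_{\mathrm{min}}(L|P')}\ge d_L(1-c\varepsilon)$ with $c=(d_L+1)/d_L$. Dividing by $d_L$ and rearranging gives $c\varepsilon\ge\tfrac{N_e}{n}(1-1/d_L^2)$, and substituting $c$ collapses the right-hand side via $(d_L^2-1)/(d_L(d_L+1))=(d_L-1)/d_L$ to the claimed bound $\varepsilon\ge\tfrac{N_e}{n}(1-1/d_L)$. I expect the main obstacle to lie in the min-entropy evaluation rather than this closing algebra: one must carefully verify the partial-trace overlaps that yield the clean two-block form, and justify that the SDP genuinely decomposes across the two orthogonal blocks so that the maximally entangled piece can be evaluated in closed form.
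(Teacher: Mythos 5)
Your proposal is correct and follows essentially the same route as the paper: reduce to Corollary~\ref{cor:approx_eastin_knill}, compute the reduced Choi state of the erased encoder (you obtain in one step, via the vanishing cross-branch overlaps, exactly the two-term decomposition $\frac{N_e}{n d_L}\id_L\otimes\ketbra{D}{D}+\bigl(1-\frac{N_e}{n}\bigr)J(\E^{(n-N_e)})$ that the paper derives by tracing out one subsystem at a time and recursing), split the min-entropy accordingly, and close with the same algebra. The only substantive difference is how the splitting is justified: the paper invokes \lemref{lemma:Hmin_identity}, which gives $\Phi_{A|B}(\lambda_1\id_A\otimes L_B+\lambda_2 M_{AB})=\lambda_1\tr[L_B]+\lambda_2\Phi_{A|B}(M_{AB})$ with no orthogonality assumption, whereas your pinching argument exploits the orthogonality of the all-$\ket{d}$ and one-excitation blocks on $P'$; both are valid here, the paper's identity being the more general tool and your block decomposition the more self-contained one for this specific code.
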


An explicit proof can be found in \appref{Appx:proof_corollary_W_code} but essentially amounts to substituting \eqref{eq:W_state_encoder} into \eqref{eq:approx_EK} of Corollary~\ref{cor:approx_eastin_knill}.

We highlight that this expression for approximate error correcting codes has a scaling for $\varepsilon$ with the number of physical qudits $n$ which, up to a constant factor in the range $\left[\frac{1}{2}, 1\right]$, is independent of the number of encoded logical qubits (or, more generally, qudits), given by $d_L = 2^k$. In particular, in the limit $d_L \rightarrow \infty$ the necessary and sufficient condition expressed in \eqref{eq:W_code_bound_epsilon} becomes 
\begin{align} \label{eq:eps_Ne_n_asymptotic}
    \varepsilon \ge \frac{N_e}{n} .
\end{align}
However, the dimension of each individual physical subsystem will scale linearly with the dimension of the logical space. We highlight that the scaling in \eqref{eq:eps_Ne_n_asymptotic} is in agreement with the $\Omega(1/n)$ scaling found in the previous work~\cite{Faist2020ContinuousSymmQEC}.

In \figref{fig:Wstate} we plot the performance of this code for different values of $N_e$ in the limit of arbitrary numbers of encoded logical qubits. For example, with $n=100$ physical qutrits we can encode arbitrary many logical qubits in the $W$-code, which admits a universal transversal gateset and can correct for the erasure of a single physical subsystem with error $\varepsilon = 0.01$.  In Ref.~\cite{Faist2020ContinuousSymmQEC}, the authors provide the following upper bound on error associated with decoding the $W$-state code quantified via the worst-case entanglement fidelity~\cite{Schumacher1996Entanglement,Gilchrist2005Distance} $F_{\mathrm{EF}}$ between the channels $(\D \circ \N \circ \E)_L$ and $\mathrm{id}_L$, for the case of $N_e =1$: 
\begin{align}
   1 - F_{\mathrm{EF}} \le \frac{\sqrt{2} + d_L}{\sqrt{n}}.
\end{align}
Upper bounds on $ 1 - F_{\mathrm{EF}}$ also form upper bounds on 
 the definition of worst-case error $\varepsilon$ considered here in \eqref{eq:epsilon_correctability} (in terms of worst-case error of decoding with respect to all pure states of the logical system). Therefore, in \figref{fig:Wstate} we also compare our necessary and sufficient condition to this sufficient condition.

We now construct an explicit decoder for the known erasure of $N_e$ physical subsystems in the $W$-state code. This known erasure channel can be represented via
\begin{align} \label{eq:known_erasure}
    \N_{\bm{s}} \coloneqq  \ketbra{\bm{s}}_{X} \otimes \ketbra{0}_{P_{\bms}} \circ \tr_{P_{\bms}},
\end{align}
where $\ketbra{0}_{P_{\bms}}$ is some fixed constant state which is prepared over the erased subsystems $P_{\bms}$ and which we shall take to be the computational basis all zeros state, and  $\bm{s} \in \{0,1\}^n$ is a bit string with $1$'s in positions labeling which of the $n$ physical subsystems where erased. For example, $\tr_{P_{10\dots 0}} \coloneqq \tr_{P_1}$ denotes the erasure of the first physical subsystem $P_1$, and we subsequently record this information in the state $\ket{10\dots0}_X$ of the classical register $X$. Moreover, we can identify the number of erased qubits we have $N_e = \abs{\bm{s}}$ with the weight of the bit string $\bm{s}$. For the noise channel in \eqref{eq:known_erasure} and the W-state encoder in \eqref{eq:W_state_encoder} we can construct the following explicit decoder:
\begin{align}
\D_{XP \rightarrow L}(\cdot) &\coloneqq \sum_{\bm{s} \in \{0,1\}^n}  K^{\bm{s}} (\cdot)  K^{\bm{s}\dagger}, \quad   K^{\bm{s}} \coloneqq \bra{\bm{s}}\otimes V^{\bm{s}}, \notag \\
V^{\bm{s}}_{P \rightarrow L} &\coloneqq \sum_j \bra{j}_{P_{\bm{s}}} \otimes  V^{(n - \abs{\bm{s}}) \dagger}_{P_{/\bm{s} \rightarrow L}} ,\label{eq:decoder_W_state}
\end{align}
where $\{\ket{j}_{P_{\bm{s}}}\}$ form an orthonormal basis for the space $P_{\bm{s}}$ and where $V^{(n)}$ is defined in \eqref{eq:W_state_encoder}.
For the sake of being explicit this notation is somewhat hard to parse, so let us unpack it with a concrete example.

\begin{tcolorbox}[breakable,colback=violet!5!white,colframe=violet!0!white]
\begin{example}[The W-state code]
Let us assume that the state $\ket{\psi}_L$ is encoded in the $W$-state code as 
\begin{equation}
    \ket{\psi}_L \rightarrow \ket{\psi^{(n)}}_{P_1 \dots P_n} ,
\end{equation}over $n$ physical subsystems.
Now, suppose that the first $N_e < n$ subsystems $P_1 \dots P_{N_e}$ are erased and this information is stored in the bit string \begin{equation}
    \bm{x} \coloneqq \underbrace{1 \dots 1}_{N_e} \underbrace{ 0 \dots 0 }_{n-N_e},
\end{equation}in the register $X$. The resulting state after encoding followed by this known erasure channel will be
\begin{align}
  &\ketbra{\bm{x}}_X \otimes \ketbra{0}_{P_1 \dots P_{N_e}} \otimes \notag \\
  &\left( \frac{N_e}{n} \ketbra{\perp}  
 + \left(1- \frac{N_e}{n}\right) \psi^{(n-N_e)} \right)_{P_{N_e +1}\dots P_n} \label{eq:noisy_W}
\end{align}
where we have defined $\ket{\perp} \coloneqq \ket{d_L, \dots, d_L}$.

The decoder in \eqref{eq:decoder_W_state} reads-out the information stored in the classical register regarding the locations of the erased subsystem, and applies the conditional isometry 
\begin{align}
    V^{\bm{x}}_{P \rightarrow L} = &\sum_j \bra{j}_{P_1 \dots P_{N_e}} \notag \\
    &\otimes \sum_{i} \ket{i}_L \bra{i^{(n-N_e)}}_{P_{N_e+1} \dots P_n}.
\end{align}
We can readily verify that 
\begin{align}
     V^{\bm{x}\dagger}  V^{\bm{x}} = \id_{P_1 \dots P_{N_e}} \otimes \Pi^{(n- N_e)}_{P_{N_e+1} \dots P_n},
\end{align}
where $\Pi^{(n- N_e)}$ is the projector onto the code subspace of the $W$-code $\E^{(n-N_e)}$ with $(n-N_e)$ physical subsystems.
The result of applying the decoder in \eqref{eq:decoder_W_state} to the state in \eqref{eq:noisy_W} is as follows 
\begin{align}
  \D \circ \N^{\bm{x}} \circ &\E^{(n)}(\psi_L) \coloneqq  \notag \\
  &\frac{N_e}{n} \ketbra{\chi}_L  
 + \left(1- \frac{N_e}{n}\right) \psi_L. \label{eq:decoded_noisy_W}
\end{align}
From \eqref{eq:decoded_noisy_W}, we can read-off that the fidelity of this resulting state with respect to the desired state $\ket{\psi}_L$ satisfies 
\begin{align}
    F(\D \circ \N^{\bm{x}} \circ \E^{(n)}(\psi_L), \psi_L ) \ge 1 - \frac{N_e}{n}, 
\end{align}
which achieves the scaling given in Corollary~\ref{corollary:W-code}.
Finally, we highlight that this analysis was independent of the choice of pure state $\ket{\psi}_L$ encoded in the $W$-code, or indeed the dimension $d_L$ of the logical space.
\end{example}
\end{tcolorbox}

\section{Outlook}
\label{sec:outlook}

We have derived a simple entropic constraint on the existence of quantum error correcting codes supporting a  universal transversal set of gates which can approximately correct for local erasure errors. In contrast to the prior works~\cite{Woods2020continuousgroupsof,Faist2020ContinuousSymmQEC,Kubica2021ApproximateEK}, this condition has the benefit of being both necessary and sufficient. To prove our approximate Eastin-Knill theorem, we derived a new result for multi-state purification in the resource theory of asymmetry, which itself is of independent interest. For example, we expect that the results presented in \secref{sec:asymmetry_purification} will find application in a wide range of other settings where we are practically or fundamentally limited to classes of operations which respect a symmetry principle (e.g.~\cite{Dolev2022Gauging,Hall2012Metrology,Giovannetti2004QuantumLimit}).

Beyond this, we hope that this work paves the groundwork for a range of future directions. One potentially fruitful direction would be to extend our results to provide constraints on transversal implementations of non-universal gatesets. For example, it would be interesting to consider necessary and sufficient constraints on transversal encodings of the discrete Clifford group with respect to Clifford group covariant noise channels (e.g. Pauli noise, local erasure noise). This might provide a route to constructing optimal coder-decoder pairs within the magic state injection model~\cite{Bravyi2016CliffordEquiv}. Another interesting subgroup would be to analyze $U(1)$-covariant codes with respect to $U(1)$-covariant noise (e.g. amplitude damping, local erasure). More speculatively, this might allow for a rigorous comparison of the benefits of codes which support transversal implementations of different subgroups of the full unitary group.

In this work, we have considered the setting of exact symmetry principles and approximate quantum error correction. Recent work has explored the setting of approximate symmetry principles~\cite{Liu2023ApproximateSymmetries}, and in this context has shown that there exist fundamental trade-offs between exact symmetry and exact quantum error correction. It would be of interest to see if we can generalize the approximate Eastin-Knill theorem presented here to this setting. One way of ``lifting" a symmetry constraint locally that is readily described within the resource theory setting is by making use of a quantum reference frame~\cite{bartlett2007reference}. This insight has inspired a number of different works which have incorporated quantum reference frames as a way of circumventing the (approximate) Eastin-Knill theorem~\cite{Woods2020continuousgroupsof,Yang2022Optimal,Hayden2021RefFrame}, typically in the asymptotic regime. It would be interesting to see whether an extension of our single-shot analysis can add any new insights in this context. 

Finally, here we have considered the setting where quantum error correction is guaranteed up to some worst-case error for all pure states of the logical system. 
Most generally, given the error correcting code $\E_{L \rightarrow P}$ and the noise channel $\N_{P \rightarrow P'}$, we can ask whether there exists any \textit{superchannel} mapping $(\N \circ \E)_{L \rightarrow P'}$ to the identity channel on the logical system, up to error tolerance $\varepsilon$. An obvious, but considerably more technically challenging, extension would be to develop necessary and sufficient conditions in this more general setting. A possible route to achieving this would be to extend the asymmetry purification theorems presented here to the case of channel purification, by drawing on a plethora of results from the literature on dynamical resources~\cite{Regula2021ChannelDistillation,Fang2022NogoPurificationChannel}.

\section{Acknowledgements}

RA would like to thank Cristina Cirstoiu, David Jennings, Michalis Skotiniotis, Kamil Korzekwa, Philippe Faist, and Ryuji Takagi for insightful comments on various iterations of this draft. We want to acknowledge funding from Ministry for Digital Transformation and of Civil Service of the Spanish Government through projects, PID2021-128970OA-I00 10.13039/501100011033 and QUANTUM ENIA project call - Quantum Spain project, and by the European Union through the Recovery, Transformation and Resilience Plan - NextGenerationEU within the framework of the Digital Spain 2026 Agenda.

\appendix
\renewcommand{\appendixname}{APPENDIX}

\begin{widetext}
\section{CONVEX RESOURCE THEORIES}

A central concern in any resource theory is to quantify the given resource in question. Complete sets of resource monotones allow for the full specification of any resource theory. Let us consider a closed, convex resource theory with free operations $\O$ which are assumed to be some convex subset of all CPTP maps which contains all identity maps. The latter assumption corresponds to the consistency requirement that doing nothing cannot generate resources from nothing.
For any fixed state $\eta \in \D(B)$ we can define the function $F_\eta : \D(A) \rightarrow \mathbb{R}$ via
\begin{align}
    F_{\eta}(\cdot) \coloneqq \max_{\E\in \O(A\rightarrow B)} \tr [\eta_B \E(\cdot)  ] .
\end{align}
This function is known to be monotonically non-increasing under any free operations drawn from the free set $\O(A\rightarrow B)$ (e.g. see Refs.~\cite{Regula2019General,gour2024resources}). Moreover, a known result from the literature~\cite{gour2024resources} is that if we range over all states $\eta \in \D(B)$, then one obtains a \textit{complete} set of monotones for the resource theory in question. Namely, for any pair of quantum states $\rho \in \D(A)$ and $\sigma \in \D(B)$, $\rho \maj \sigma$ if and only if (e.g.~see Theorem~11.1.1 of the recent review~\cite{gour2024resources})
\begin{align}
    F_\eta (\rho) \ge F_\eta(\sigma),
\end{align}
for all $\eta \in \D(B)$. Moreover, when $\eta = \ketbra{\psi}$ is a pure state, the quantity $F_\psi (\rho)$ corresponds to the optimal fidelity with which the state $\psi$ can be distilled from the initial state $\rho$, and hence has been termed the \textit{fidelity of distillation}~\cite{Regula2018CoherenceDistillation}.

\subsection{A single, complete monotone for exact conversion to pure states}

Let us consider the following proposition, which provides a single complete monotone for state transformations under a closed, convex resource theory for which the final state in question is rank-one.

\begin{tcolorbox}[breakable,colback=white!3!white,colframe=black!100!white]
\begin{proposition} \label{prop:single_condition_pure}
Let $\rho_A$ and $\psi_B$ be two quantum states of systems $A$ and $B$, respectively, where $\psi_B \in \mathrm{pure}(B)$. Moreover, suppose that $\O$ is the set of free operations of a closed, convex resource theory. There exists an operation $\E \in \O(A\rightarrow B)$ such that $\E(\rho_A) = \psi_B$ if and only if $F_{\psi_B} (\rho_A) \ge F_{\psi_B} (\psi_B)$.
\end{proposition}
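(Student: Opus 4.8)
The plan is to prove the biconditional directly, exploiting a special feature of a \emph{pure} target: the monotone $F_{\psi_B}$ is bounded above by $1$ and attains this value on $\psi_B$ itself, so the whole infinite family of completeness monotones collapses to a single scalar equation. First I would record the normalization $F_{\psi_B}(\psi_B) = 1$. Since $\O$ contains the identity channel on $B$ by assumption, $F_{\psi_B}(\psi_B) \ge \tr[\psi_B\, \mathrm{id}(\psi_B)] = \tr[\psi_B^2] = 1$; conversely, for any channel $\E$ and any input state, $\tr[\psi_B \E(\cdot)] = \bra{\psi}\E(\cdot)\ket{\psi} \le 1$, because $\E(\cdot)$ is a state and $\psi_B = \ketbra{\psi}$ is a rank-one projector. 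Hence $F_{\psi_B}$ never exceeds $1$ and equals $1$ at $\psi_B$, so the hypothesis $F_{\psi_B}(\rho) \ge F_{\psi_B}(\psi_B)$ is equivalent to the saturation $F_{\psi_B}(\rho) = 1$.

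For necessity, suppose some $\E \in \O(A \to B)$ satisfies $\E(\rho) = \psi_B$. Feeding this particular $\E$ into the maximization defining $F_{\psi_B}(\rho)$ gives $F_{\psi_B}(\rho) \ge \tr[\psi_B \E(\rho)] = \tr[\psi_B^2] = 1 = F_{\psi_B}(\psi_B)$, as required. I note that this direction needs neither monotonicity nor the completeness theorem recalled above.

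The substantive direction is sufficiency. Assuming $F_{\psi_B}(\rho) = 1$, I would first argue that the supremum is genuinely attained: in finite dimensions the set of CPTP maps $A \to B$ is compact, the free set $\O(A \to B)$ is by hypothesis a closed subset and hence compact, and $\E \mapsto \tr[\psi_B \E(\rho)]$ is a continuous (indeed linear) functional, so it attains its maximum at some $\E^\star \in \O(A \to B)$ with $\tr[\psi_B \E^\star(\rho)] = 1$. Writing $\sigma \coloneqq \E^\star(\rho)$, the relation $\bra{\psi}\sigma\ket{\psi} = 1$ for a state $\sigma$ and a unit vector $\ket{\psi}$ forces $\mathrm{supp}(\sigma) \subseteq \Span\{\ket{\psi}\}$: expanding $\sigma$ in its eigenbasis, every eigenvector carrying nonzero weight must be parallel to $\ket{\psi}$, and so $\sigma = \ketbra{\psi} = \psi_B$. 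This $\E^\star$ is the desired free operation.

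The main obstacle is precisely the attainment step, since the argument hinges on the supremum in $F_{\psi_B}$ being a true maximum; this is exactly where the closedness of $\O$ together with finite dimensionality enters, and everything else is elementary. I would also remark that this route deliberately bypasses the full completeness theorem: rather than verifying $F_\eta(\rho) \ge F_\eta(\psi_B)$ for every $\eta \in \D(B)$, the pure-target overlap condition directly manufactures the witnessing channel $\E^\star$, which is the conceptual reason a single monotone suffices when the output is rank-one.
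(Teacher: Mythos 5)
Your proof is correct and follows essentially the same route as the paper's: establish $F_{\psi_B}(\psi_B)=1$ via the identity channel and the trivial upper bound $\tr[\psi_B\,\E(\cdot)]\le 1$, reduce the hypothesis to $F_{\psi_B}(\rho)=1$, and conclude that saturation forces the existence of a free channel sending $\rho$ to $\psi_B$. The only difference is that you spell out the final step the paper dismisses as ``clear''---namely that the supremum is attained (via closedness of $\O$ inside the compact set of CPTP maps) and that $\bra{\psi}\sigma\ket{\psi}=1$ forces $\sigma=\psi_B$---which is a welcome tightening rather than a departure.
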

\end{tcolorbox}
\begin{proof}
Since $F_\eta$ are known to be valid resource monotones for any state $\eta \in \D(B)$, we know that if $\rho \maj \psi$ then $F_\psi(\rho) \ge F_\psi(\psi)$. Therefore, it suffices to show the converse.

Let us assume then that $F_\psi(\rho) \ge F_\psi(\psi)$, or more explicitly that
\begin{align} \label{eq:condition_rank_1}
    \max_{\E \in \O(A \rightarrow B)} \tr[\psi_{B} \E(\rho_A) ] \ge \max_{\E \in \O(B \rightarrow B)} \tr[\psi_{B} \E(\psi_B) ].
\end{align}
We begin by noting that straightforwardly we have the following upper bound which holds for any pair of states $\tau \in \D(S)$ and $\psi \in \D(B)$
\begin{align} \label{subeq:pure_opt}
    \max_{\E \in \O(S \rightarrow B)} \tr[ \psi_B \E(\tau_S)]\le 1,
\end{align}
since this is the global maximum of the trace-product $\tr[\rho \sigma] $ of any pair of quantum states $\rho$ and $\sigma$. Now, by definition of a closed, convex resource theory, $\mathrm{id}_{B\rightarrow B} \in \O(B \rightarrow B)$ is a feasible solution to the optimization problem in \eqref{subeq:pure_opt} for $\tau = \ketbra{\psi}$ and $S=B$. Therefore, we have
\begin{align}
  F_{\psi_B}(\psi_B) \ge  \tr[\psi_B \mathrm{id}_B(\psi_B)] = \abs{\braket{\psi}}^2= 1.
\end{align}
Since this is also the global optimum of $F_\psi(\psi)$ we can conclude that 
\begin{align} \label{eq:tr_product_1}
  F_\psi(\psi) =  \max_{\E \in \O(B \rightarrow B)} \tr[\psi \E(\psi) ] = 1.
\end{align}
Combining Eqs~(\ref{eq:condition_rank_1}), (\ref{subeq:pure_opt}), and (\ref{eq:tr_product_1}) implies that 
\begin{align}
  \max_{\E \in \O(A \rightarrow B)} \tr[ \psi_B \E(\rho_A) ]=1,
\end{align}
must hold. It is clear that the objective function $ \tr[\psi \E(\rho) ]$ attains the value $1$ if and only if there exists $\E \in \O(A\rightarrow B)$ such that $\E(\rho) = \psi$, which completes the proof.\end{proof}

The above theorem applies only to the case where we have a pure target state of the transition. However, should we wish to move beyond this setting, we note that it can be straightforwardly leveraged to produce sufficient conditions on arbitrary state transitions whenever the set of free operations $\O$ in question contains the partial trace $\tr_C \in \O$. In particular, given some output state of interest $\sigma_B$, we can always construct any purification $\psi_{BC}$ such that  $\sigma_B = \tr_C \psi_{BC}$. Then, by Proposition~\ref{prop:single_condition_pure}, we are guaranteed that there exists an operation $\E \in \O(A\rightarrow B)$ such that $\E(\rho_A) = \sigma_B$ if $F_\psi (\rho) \ge F_\psi(\psi)$.

\subsection{Approximate conversion}

Whenever there exists a free operation $\E \in \O$ and a state $\sigma $ satisfying $F(\psi,\sigma) \ge 1-\varepsilon$ such that $\E(\rho) = \sigma$, we will write
\begin{align}
    \rho \stackrel{\O}{\rightarrow}_\varepsilon \psi.
\end{align}
With this notation in place, let us note the following proposition, which now provides a single complete monotone for \textit{approximate} state transformations under any closed, convex resource theory for which the final state in question is rank-one.

\begin{tcolorbox}[breakable,colback=white!3!white,colframe=black!100!white]
\begin{proposition} \label{proposition:approx}
Let $0\le \varepsilon \le 1$. Let $\rho_A$ and $\psi_B$ be two quantum states of systems $A$ and $B$, respectively. Moreover, suppose that $\O$ is the set of free operations of a closed, convex resource theory. We have $\rho \stackrel{\O}{\rightarrow}_\varepsilon \psi$ if and only if $F_\psi (\rho) \ge 1- \varepsilon$.
\end{proposition}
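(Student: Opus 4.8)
The plan is to exploit the fact that, when the target state $\psi_B$ is pure (rank-one, as throughout this subsection), the fidelity constraint degenerates into precisely the linear functional that defines $F_{\psi_B}$. Concretely, I would first record the elementary identity $F(\psi_B, \sigma_B) = \bra{\psi}\sigma_B\ket{\psi} = \tr[\psi_B\, \sigma_B]$, which holds for any state $\sigma_B$ whenever $\psi_B = \ketbra{\psi}$: since $\sqrt{\psi_B} = \psi_B$ is a rank-one projector, $\sqrt{\psi_B}\,\sigma_B\,\sqrt{\psi_B} = \bra{\psi}\sigma_B\ket{\psi}\,\psi_B$, whose square-root trace squared returns $\bra{\psi}\sigma_B\ket{\psi}$. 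This single observation collapses the two-part ``there exists a free operation producing an $\varepsilon$-close state'' condition into a statement about the scalar $F_{\psi_B}(\rho_A) = \max_{\E \in \O(A \rightarrow B)} \tr[\psi_B\, \E(\rho_A)]$, and it is the crux of the whole argument.

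For the forward implication I would assume $\rho \stackrel{\O}{\rightarrow}_\varepsilon \psi$ and unpack the definition: there exist $\E \in \O(A \rightarrow B)$ and a state $\sigma_B = \E(\rho_A)$ with $F(\psi_B, \sigma_B) \ge 1 - \varepsilon$. Substituting the fidelity identity gives $\tr[\psi_B\, \E(\rho_A)] \ge 1 - \varepsilon$, and since $F_{\psi_B}(\rho_A)$ is by definition the maximum of exactly this quantity over all $\E \in \O(A \rightarrow B)$, the bound $F_{\psi_B}(\rho_A) \ge 1 - \varepsilon$ follows at once.

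For the converse I would assume $F_{\psi_B}(\rho_A) \ge 1 - \varepsilon$. Because $\O$ is a closed, convex resource theory, the feasible set in the defining maximization is compact and the optimum is attained by some $\E^\star \in \O(A \rightarrow B)$. Setting $\sigma_B \coloneqq \E^\star(\rho_A)$ and invoking the identity again yields $F(\psi_B, \sigma_B) = \tr[\psi_B\, \sigma_B] = F_{\psi_B}(\rho_A) \ge 1 - \varepsilon$, so $\sigma_B$ is $\varepsilon$-close to $\psi_B$ and is reached from $\rho_A$ by a free operation --- which is exactly the assertion $\rho \stackrel{\O}{\rightarrow}_\varepsilon \psi$.

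The proof is short precisely because almost all the content is packed into the linearization of the fidelity for pure targets; the only genuine points of care are (i) the purity of $\psi_B$, without which $F(\psi_B, \cdot)$ would fail to be linear and the single-monotone characterization would break down, and (ii) the attainment of the maximum defining $F_{\psi_B}$, which I expect to be the main (albeit mild) obstacle and which I would discharge using the assumed closedness (hence compactness) of the free set $\O$. Both requirements are already guaranteed by the standing hypotheses, so no further machinery is needed.
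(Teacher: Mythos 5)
Your proof is correct and follows essentially the same route as the paper's: both directions reduce to the identity $F(\psi,\sigma)=\tr[\psi\sigma]$ for pure $\psi$, with the forward implication reading off the definition of the maximum and the converse taking the optimizer $\E^\star$ and its output state as the witness. Your explicit remarks on the purity of $\psi_B$ and on attainment of the maximum via closedness are points the paper leaves implicit, but they do not change the argument.
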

\end{tcolorbox}
\begin{proof}
 If   $\rho \stackrel{\O}{\rightarrow}_\varepsilon \psi $ then, by definition, there exists a state $\sigma$ and a channel $\N \in \O$ such that $\N(\rho) = \sigma$ and $F(\psi, \sigma) \ge 1 - \varepsilon$. Therefore, 
 \begin{align}
     F_\psi (\rho) \equiv \max_{\E \in \O} \tr[\psi \E( \rho) ] \ge \tr[\psi\N(\rho)] =\tr[\psi \sigma ] =F(\psi, \sigma) \ge 1- \varepsilon,
 \end{align}
where the first inequality follows from the definition of the maximum, and in the second line we have used the fact that the fidelity $F(\psi, \sigma) = \tr[\psi \sigma]$ for any pure state $\psi$.

To show the converse, let us assume that $F_\psi (\rho) \ge 1- \varepsilon$, where we recall that
\begin{align} \label{subeq:F_psi_rho}
   F_\psi (\rho) =  \max_{\E \in \O} \tr[\psi \E(\rho) ] .
\end{align}
Now let $\E^* \in \O$ be the optimal solution of the optimization problem in \eqref{subeq:F_psi_rho}. Then, 
 since $\E^*\in \O$ is a CPTP map $ \tau \coloneqq \E^*(\rho) $ is a valid quantum state, which satisfies
\begin{align} \label{subeq:F_psi_E_var}
 F(\psi, \tau) = \tr[\psi \E^*(\rho) ]  \ge 1-\varepsilon.
\end{align}
Therefore $\rho \stackrel{\O}{\rightarrow}_\varepsilon  \psi$, as claimed. This completes the proof.\end{proof}

\subsection{Multi-state conversion}

For convenience of the reader, let us briefly recap the setting of multi-state conversion as described in \secref{sec:multi-conversion} of the main text. We seek to generalize the results we have seen so far to the case where we have two (continuous or discrete) ensembles of states $ \bm{\underline{\smash{\rho}}}_A \coloneqq \{\rho_\mu\}_{\mu \in \Lambda}$ and $ \bm{\underline{\smash{\psi}}}_B \coloneqq \{\psi_\mu\}_{\mu \in \Lambda}$, parameterized by some real-valued $\mu$ over the compact set $\Lambda$.  We seek to answer whether or not there exists a single free channel $\E \in \O(A\rightarrow B)$ and some third ensemble of states $\bm{\underline{\smash{\sigma}}}_B \coloneqq \{\sigma_\mu\}_{\mu \in \Lambda}$ such that 
\begin{align}
    \E(\rho_\mu) = \sigma_\mu \text{ and } F(\psi_\mu,\sigma_\mu) \ge 1 - \varepsilon,
\end{align}
for all $\mu \in \Lambda$. Whenever the answer to this question is affirmative, we shall write 
\begin{align}
   \bm{\underline{\smash{\rho}}} \stackrel{\O}{\rightarrow}_\varepsilon \bm{\underline{\smash{\psi}}} .
\end{align}
In this setting we have the following result.

\begin{tcolorbox}[breakable,colback=white!3!white,colframe=black!100!white]
\begin{proposition} 
\label{prop:multi}
Let $ \bm{\underline{\smash{\rho}}}_A \coloneqq \{\rho_{\mu}\}_{\mu \in \Lambda}$ and $ \bm{\underline{\smash{\psi}}}_B \coloneqq \{\psi_{\mu}\}_{\mu \in \Lambda}$, where $\rho_\mu \in \D(A)$ and $\psi_\mu \in \mathrm{pure}(B)$ for all $\mu \in \Lambda$, where $\Lambda$ is a compact set. Moreover, suppose that $\O$ is the set of free operations of a closed, compact, and convex resource theory. We have $ \bm{\underline{\smash{\rho}}} \stackrel{\O}{\rightarrow}_\varepsilon \bm{\underline{\smash{\psi}}}$ if and only if
\begin{align}
\min_{\{p(\mu) \}}   \max_{\E \in \O}   \tr[  \int_\Lambda d\mu \, p(\mu) \psi_\mu^T \otimes \rho_\mu \J_{\E^\dagger} ]  \ge 1 - \varepsilon ,
\end{align}
where the minimization is performed all probability density functions $ p(\mu) $ over $\Lambda$ such that $ \int_\Lambda d\mu \, p(\mu) = 1$ and $p(\mu) \in \mathbb{R}^+$ for all $\mu \in \Lambda$, and where $\J_{\E^\dagger} \coloneqq \id \otimes \E^\dagger \left(\sum_{i,j}\ketbra{ii}{jj} \right)$ is the Choi matrix of the adjoint channel $\E^\dagger$.
\end{proposition}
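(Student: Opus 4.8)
The plan is to recognise the bilinear expression in the statement as an average fidelity, and then to pass from the existence of a single free channel that works for every $\mu$ to the claimed min--max via a minimax theorem.

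First I would unpack the Choi matrix. Using $\J_{\E^\dagger} = \id \otimes \E^\dagger\!\left(\sum_{ij}\ketbra{ii}{jj}\right)$ together with the adjoint relation $\tr[\rho\,\E^\dagger(X)] = \tr[\E(\rho)\,X]$, a short computation gives the pointwise identity
\[
\tr\!\big[(\psi_\mu^T \otimes \rho_\mu)\,\J_{\E^\dagger}\big] = \tr[\psi_\mu\,\E(\rho_\mu)] = F(\psi_\mu, \E(\rho_\mu)),
\]
for every $\E \in \O$ and $\mu \in \Lambda$, where the last equality uses that $\psi_\mu$ is pure. By linearity the objective in the proposition becomes $G(\E,p) \coloneqq \int_\Lambda d\mu\, p(\mu)\, F(\psi_\mu, \E(\rho_\mu))$, so the right-hand condition reads $\min_{\{p\}} \max_{\E\in\O} G(\E,p) \ge 1-\varepsilon$.

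Next I would rewrite the left-hand side. By definition, $\bm{\underline{\smash{\rho}}} \stackrel{\O}{\rightarrow}_\varepsilon \bm{\underline{\smash{\psi}}}$ means there exists a single $\E\in\O$ with $F(\psi_\mu,\E(\rho_\mu)) \ge 1-\varepsilon$ for all $\mu$, i.e. $\sup_{\E\in\O}\inf_{\mu\in\Lambda} F(\psi_\mu,\E(\rho_\mu)) \ge 1-\varepsilon$; since $\O$ is compact and $\E \mapsto F(\psi_\mu,\E(\rho_\mu))$ is continuous, this supremum is attained, so the inequality is genuinely equivalent to the existence of a witnessing channel. I would then invoke the elementary fact that the infimum of a continuous function over a compact set equals its minimum average against a probability density, $\inf_\mu h(\mu) = \min_{\{p\}} \int_\Lambda d\mu\, p(\mu) h(\mu)$ (the minimiser being a point mass at an infimiser), to obtain $\sup_{\E}\inf_\mu F = \sup_{\E}\min_{p} G(\E,p)$.

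The decisive step is the exchange $\sup_{\E}\min_{p} G = \min_{p}\max_{\E} G$. The functional $G$ is affine (hence concave) in $\E$ and affine (hence convex) in $p$; the free set $\O$ is convex and compact, and the probability densities on the compact set $\Lambda$ form a convex, weak-$*$ compact set. Sion's minimax theorem therefore yields the exchange, with both optimisations attained as genuine $\min$ and $\max$. Chaining the displays gives
\[
\sup_{\E}\inf_\mu F(\psi_\mu,\E(\rho_\mu)) = \min_{p}\max_{\E\in\O} G(\E,p),
\]
so the left-hand existence statement holds at threshold $1-\varepsilon$ if and only if the right-hand min--max does, which is the claim. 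The main obstacle I anticipate is the clean justification of the minimax step in this continuum setting: one must fix the weak-$*$ topology on probability measures over $\Lambda$, confirm weak-$*$ compactness of that set and joint continuity of $G$, and rely on an (implicit) continuity of $\mu\mapsto(\rho_\mu,\psi_\mu)$ so that $\mu\mapsto F(\psi_\mu,\E(\rho_\mu))$ is continuous. Everything else is either Choi--channel bookkeeping or the elementary reduction of an infimum to an average.
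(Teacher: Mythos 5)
Your proposal is correct and follows essentially the same route as the paper's own proof: both reduce the worst case over $\mu$ to a minimum over probability densities on $\Lambda$ (point masses being the extreme points), invoke Sion's minimax theorem using compactness and convexity of $\O$ and of the set of probability measures on the compact set $\Lambda$ together with bilinearity of the objective, and finish with the Choi-adjoint identity $\tr[Y\,\E(X)] = \tr[(Y^T\otimes X)\,\J_{\E^\dagger}]$. Your explicit flagging of the topological prerequisites for the minimax step (weak-$*$ compactness, continuity of $\mu\mapsto(\rho_\mu,\psi_\mu)$) is a point the paper leaves implicit, but it is not a different argument.
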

\end{tcolorbox}

\begin{proof}
To begin, note that $ \bm{\underline{\smash{\rho}}} \stackrel{\O}{\rightarrow}_\varepsilon \bm{\underline{\smash{\psi}}}$ if and only if there exists $\E \in \O $ such that
\begin{align} \label{subeq:trpsimu}
   \tr[\psi_\mu \E(\rho_\mu)] \ge 1 - \varepsilon,
\end{align}
for all $\mu \in \Lambda$. Or, equivalently  $ \bm{\underline{\smash{\rho}}} \stackrel{\O}{\rightarrow}_\varepsilon \bm{\underline{\smash{\psi}}}$ if and only if
\begin{align} \label{subeq:Lambda_distribution_over}
  \int_\Lambda d\mu \, p(\mu) \tr[\psi_\mu \E(\rho_\mu)]   \ge 1 - \varepsilon,
\end{align}
for all probability density functions $ p(\mu) $ over $\Lambda$ such that $ \int_\Lambda d\mu \, p(\mu) = 1$ and $p(\mu) \in \mathbb{R}^+$ for all $\mu \in \Lambda$. To see this equivalence, notice that the dirac delta probability density functions $p_\nu(\mu) \coloneqq \delta (\mu - \nu) $ for all $\nu \in \Lambda$ form the extreme points of the convex set of all distributions $\{p(\mu)\}$ over $\Lambda$. We therefore find that $ \bm{\underline{\smash{\rho}}} \stackrel{\O}{\rightarrow}_\varepsilon \bm{\underline{\smash{\psi}}}$ if and only if
\begin{align}
   \max_{\E \in \O}  \min_{\{p(\mu) \}} \int_\Lambda d\mu \, p(\mu) \tr[\psi_\mu \E(\rho_\mu)] \ge 1 - \varepsilon,
\end{align}
where the minimization is performed over all probability density functions $ p(\mu) $ over $\Lambda$. The set of free operations $\O$ is compact and convex by assumption, and the set of probability measures over the compact set $\Lambda$ is also a compact~\cite{Prokhorov1956Convergence} and convex set. Moreover, the objective function is linear in $\E$ and $p(\mu)$. Therefore, we can invoke Sion's minimax theorem~\cite{Sion1958Minimax} and we arrive at the following equivalent condition
\begin{align}
1 - \varepsilon \le \min_{\{p(\mu) \}}   \max_{\E \in \O}  \int_\Lambda d\mu \, p(\mu) \tr[\psi_\mu \E(\rho_\mu)]  \notag  =\min_{\{p(\mu) \}}   \max_{\E \in \O}   \tr[  \int_\Lambda d\mu \, p(\mu) \psi_\mu^T \otimes \rho_\mu \J_{\E^\dagger} ],
\end{align}
where in the equality we have used the identity $\tr[Y \E(X)] = \tr[Y^T \otimes X \J_{\E^\dagger}]$ and the linearity of the trace operation.
This completes the proof. \end{proof}

\section{PROOFS OF PROPOSITIONS \ref{prop:asymmetry}-\ref{prop:approximate}, AND LEMMA~\ref{lem:multi}}
\label{appx:proofs_of_theorems}

Here we show that we can leverage Propositions~\ref{prop:single_condition_pure}-\ref{prop:multi} to produce Propositions \ref{prop:asymmetry}-\ref{prop:approximate} and \lemref{lem:multi} presented in the main text as immediate corollaries. The only tool we shall need is the fact that, when the set of free operations $\O = \O_G$ corresponds to the set of $G$-covariant channels defined in \eqref{eq:G-cov_channels} of the main text, it is known that~\cite{gour2018quantum,gour2024resources} 
\begin{align}
    \max_{\E \in \O_G(A\rightarrow B)} \tr[\psi_B \E (\rho_A) ] = 2^{- H_{\mathrm{min}}(B|A)_{\Pi_G(\psi^T \otimes \rho)}}.
\end{align}
To give an idea of where this identity comes from, let us reproduce the calculation given in Eq.~(15.223) of Ref.~\cite{gour2024resources}. First, let $\J_\E\coloneqq \mathrm{id} \otimes \E \ketbra{\id}$ is the Choi matrix of the channel $\E$, let $\E^\dagger \coloneqq \{K_i^\dagger\} $ denote the adjoint channel with respect to the channel $\E \coloneqq \{ K_i\}$, and let  $\ket{\id} \coloneqq \sum_i \ket{ii}$. Moreover, let $\mathrm{unital}(A\rightarrow B) $ as the set of all unital channels from $A$ to $B$: those channels $\E_{A \rightarrow B}$ for which $\E(\id_A) = \id_B$). Then we have 
\begin{align}
    \max_{\E \in \O_G(A\rightarrow B)} \tr[\psi_B \E (\rho_A) ] &=  \max_{\E \in \O_G(A\rightarrow B) } \bra{\id} (\mathrm{id} \otimes \E )(\psi_B^T \otimes \rho )\ket{\id} =  \max_{\E \in \O_G(A\rightarrow B) } \tr[ \J_{\E^\dagger} (\psi_B^T \otimes \rho_A )] \notag \\
    &=  \max_{\E \in CPTP(A\rightarrow B) } \tr[\Pi_G( \J_{\E^\dagger}) (\psi_B^T \otimes \rho )] =  \max_{\E \in CPTP(A\rightarrow B) } \tr[ \J_{\E^\dagger} \Pi_G (\psi_B^T \otimes \rho )] \notag \\
    &=  \max_{\E \in \mathrm{unital}(A\rightarrow B) } \tr[ \J_{\E} \Pi_G (\psi_B^T \otimes \rho )] =  \max_{ X \ge 0, \tr_{B} X = \id_A  } \tr[ X_{BA} \Pi_G (\psi_B^T \otimes \rho )] \notag \\
  &= 2^{- H_{\mathrm{min}}(B|A)_{\Pi_G(\psi^T \otimes \rho)}},
\end{align}
where in the final equality we have identified an expression for the dual form of the conditional min-entropy (e.g. see \cite{tomamichel2013thesis}).

Therefore, we can identify the following expression for $F_{\psi_B}(\rho_A)$
\begin{align}
 F_{\psi_B}(\rho_A) = 2^{-H_{\mathrm{min}}(B|A)_{\Pi^G(\psi_B^T \otimes \rho_A)}} . \label{subeq:logF}
\end{align}

\begin{proof}[Proof of Propositions \ref{prop:asymmetry} and \ref{prop:approximate}]

The resource theory of asymmetry with free operations $\O_G$ as defined in \eqref{eq:G-cov_channels} is a closed, convex resource theory.
Therefore, combining \eqref{subeq:logF} with Proposition~\ref{prop:single_condition_pure} and Proposition~\ref{proposition:approx} respectively gives Proposition~\ref{prop:asymmetry} and Proposition~\ref{prop:approximate}, respectively.\end{proof}

\begin{proof}[Proof of \lemref{lem:multi}]

The resource theory of asymmetry with free operations $\O_G$ as defined in \eqref{eq:G-cov_channels} is a closed, compact, and convex resource theory. Therefore Proposition~\ref{prop:multi} applies. In particular, we have $ \bm{\underline{\smash{\rho}}} \stackrel{G}{\rightarrow}_\varepsilon \bm{\underline{\smash{\psi}}}$ if and only if 
\begin{align} \label{subeq:prob_dist_eps}
\min_{\{p(\mu) \}} \max_{\E \in \O_G}   \tr[  X \J_{\E^\dagger} ] \ge 1 - \varepsilon ,
\end{align}
where $X \coloneqq \int_\Lambda d\mu \, p(\mu) \psi_\mu^T \otimes \rho_\mu$. Now we can identify the expression on the LHS of \eqref{subeq:prob_dist_eps} with a minimization over a conditional min-entropy by exploiting the following equivalence
\begin{align}
     \max_{\E \in \O_G}   \tr[  X \J_{\E^\dagger} ] = \max_{\E \, CPTP} \tr[  \Pi_G(X) \J_{\E^\dagger} ] = 2^{-H_{\mathrm{min}}(B|A)_{\Pi_G(X)}}. \label{subeq:hmin_multi}
\end{align}
Substituting \eqref{subeq:hmin_multi} into \eqref{subeq:prob_dist_eps} and taking the logarithm completes the proof. \end{proof}

\section{PROOF OF LEMMA~\ref{lem:covariant_decoder_optimal}}
\label{app:cov_decoder_optimality}

As shown in Ref.~\cite{Zhou2021newperspectives} under $G$-covariant encoding and noise, the optimal decoder can be assumed to also be $G$-covariant. In the following lemma, for completeness, we confirm that this also holds for our precise definition of $\varepsilon$-correcting codes given in \eqref{eq:epsilon_correctability}. 
\begin{lemma}[\cite{Zhou2021newperspectives}] \label{lem:covariant_decoder_optimal}
Given a $G$-covariant code $\E_{L \rightarrow P}$ which is $\varepsilon$-correctable against the $G$-covariant noise channel $\N_{P \rightarrow P'}$, the optimal decoder can always be assumed to be $G$-covariant.
\end{lemma}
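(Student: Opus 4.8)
The plan is to prove this by a group-averaging (twirling) argument applied to the decoder. Starting from any decoder $\D_{P' \rightarrow L}$ that witnesses $\varepsilon$-correctability in the sense of \eqref{eq:epsilon_correctability}, I would define the twirled decoder
\[
  \bar{\D} \coloneqq \int_G dg \, \U^{g^{-1}}_L \circ \D \circ \U^g_{P'},
\]
which, being a convex combination of CPTP maps (each pre- and post-composed with unitary channels), is itself a valid CPTP map from $P'$ to $L$. The first step is to verify that $\bar{\D}$ is $G$-covariant in the sense of \eqref{eq:G-cov_channels}. Composing with $\U^h_{P'}$ and relabelling the integration variable $g \mapsto g h^{-1}$ using the (left-)invariance of the Haar measure pulls a factor $\U^h_L$ to the front, yielding $\bar{\D} \circ \U^h_{P'} = \U^h_L \circ \bar{\D}$ for all $h \in G$.

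The second step is to show that $\bar{\D}$ achieves the same worst-case error as $\D$. Here I would first combine the covariance of $\N$ and $\E$ to obtain the identity $\U^g_{P'} \circ \N \circ \E = \N \circ \U^g_P \circ \E = \N \circ \E \circ \U^g_L$, which lets me rewrite
\[
  \bar{\D} \circ \N \circ \E(\psi_L) = \int_G dg \, \U^{g^{-1}}_L \circ \D \circ \N \circ \E(\psi^g_L),
\]
where $\psi^g_L \coloneqq \U^g_L(\psi_L)$ remains a pure state. Applying concavity of the fidelity in its first argument, followed by the unitary invariance $F(\U^{g^{-1}}_L(\tau),\psi_L) = F(\tau,\U^g_L(\psi_L)) = F(\tau,\psi^g_L)$, lower-bounds the fidelity by $\int_G dg \, F(\D \circ \N \circ \E(\psi^g_L),\psi^g_L)$.

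The final step is to invoke the hypothesis that the original decoder $\D$ satisfies \eqref{eq:epsilon_correctability} for \emph{every} pure logical state, and in particular for each $\psi^g_L$. Each integrand is then at least $1-\varepsilon$, and since the Haar measure is normalized to $\int_G dg = 1$, the integral is bounded below by $1-\varepsilon$. Hence $\bar{\D}$ is a $G$-covariant decoder meeting the required fidelity for all pure inputs, so the optimal decoder may indeed be taken $G$-covariant.

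This is essentially a routine twirling argument, so I do not anticipate a deep obstacle; the step requiring the most care is the chained use of the covariance relations to transport the group action off the noise/output side and onto the logical input $\psi_L$, together with the correct bookkeeping that concavity is applied in the \emph{first} fidelity argument while unitary invariance absorbs the residual twirl $\U^{g^{-1}}_L$ into the target state. I would also take care to confirm that $\bar{\D}$ is genuinely CPTP (immediate from convexity of the set of channels and closure under unitary conjugation) and that the Haar integral is well-defined, which is guaranteed by compactness of $G$.
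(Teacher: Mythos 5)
Your proposal is correct and follows essentially the same route as the paper's proof: twirl the optimal decoder over $G$, use covariance of $\N\circ\E$ and invariance of the set of pure logical states to transport the group action onto the input, and apply concavity plus unitary invariance of the fidelity to conclude the twirled decoder loses nothing. The paper packages the concavity step via the worst-case function $f(\R)=\min_{\psi_L}F(\R\circ\tilde\E(\psi_L),\psi_L)$, but this is the same argument you give pointwise in $\psi_L$.
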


\begin{proof}
We follow the proof strategy of Lemma~2 of Ref.~\cite{Zhou2021newperspectives}.
  The code $\E_{L \rightarrow P}$ is $\varepsilon$-correctable against the noise channel $\N_{P \rightarrow P'}$ iff 
  \begin{align} \label{eq:min_R_fR}
      \max_{\R \in CPTP(P'\rightarrow L)}f(\R) \ge 1-\varepsilon,
  \end{align}
  where 
  \begin{align} \label{subeq:fR}
  f(\R) \coloneqq \min_{\psi_L }  F(\R_{P'\rightarrow L} \circ \tilde{\E}_{L \rightarrow P'}(\psi_L),\psi_L),
  \end{align}
  where the minimization is performed over all pure states in $\mathrm{pure}(L)$ and $\tilde{\E}_{L \rightarrow P'} \coloneqq \N_{P \rightarrow P'} \circ \E_{L \rightarrow P}$. By assumption 
 $\E_{L \rightarrow P}$ and $\N_{P \rightarrow P'}$ are $G$-covariant, and therefore $\tilde{\E}_{L \rightarrow P'}$ is also $G$-covariant. Then, letting $\R^*$ be an optimal solution to the minimization problem in \eqref{eq:min_R_fR}, for all $g \in G$ we have
 \begin{align}
     f(\U^{g \dagger}_{L} \circ \R_{P'\rightarrow L}^* \circ \U^g_{P'}) &= \min_{\psi_L} F( \U^{g \dagger}_{L} \circ \R_{P'\rightarrow L}^* \circ \U^g_{P'} \circ \tilde{\E}_{L \rightarrow P'}(\psi_L),\psi_L) \notag \\
      &= \min_{\psi_L} F( \U^{g \dagger}_{L} \circ \R_{P'\rightarrow L}^* \circ  \tilde{\E}_{L \rightarrow P'}(\U_L^g(\psi_L)),\psi_L) \notag \\
       &= \min_{\psi_L} F(   \R_{P'\rightarrow L}^* \circ  \tilde{\E}_{L \rightarrow P'}(\U_L^g(\psi_L)),\U_L^g(\psi_L)) \notag \\
       &= \min_{\psi_L} F(   \R_{P'\rightarrow L}^* \circ  \tilde{\E}_{L \rightarrow P'}(\psi_L),\psi_L) =f(\R_{P'\rightarrow L}^*), \label{subeq:f_invariant_Ug}
 \end{align}
 where in the second equality we use the fact that $\tilde{\E}$ is $G$-covariant and in the fourth equality we invoke the fact that the set of pure states $\mathrm{pure}(L) = \U_L \circ \mathrm{pure}(L)$ is invariant under the application of any unitary channel $\U_L$. 
We can readily verify that $f$ as defined in \eqref{subeq:fR} is concave (which follows from the fact that the fidelity $F$ is concave in its first argument). Therefore, given any optimal solution $\R^*$ we can therefore always construct the manifestly $G$-covariant channel 
\begin{align}
    \R^G \coloneqq \int_{G} dg \, \U_L^{g\dagger} \circ \R^* \circ \U_{P'}^{g} ,
\end{align}
such that, by \eqref{subeq:f_invariant_Ug}, $f(\R^G) \ge f(\R^*) \ge 1-\varepsilon$, and therefore $\R^G$ is also an optimal decoder. \end{proof}

\section{PROOF OF THEOREM \ref{thm:Ud_covariant}}
\label{apx:proof_lemma_Ud_cov_encoder}

Let us first present the following lemma.
\begin{lemma}[\cite{Alexander2022Infinitesimal}] \label{lemma:Hmin_identity} Let us define the functional $\Phi_{A|B}(K_{AB}) \coloneqq 2^{-H_{\mathrm{min}}(A|B)_{K}}$. Moreover, let $K_{AB}$ be a positive semidefinite, linear operator of the form $K_{AB} \coloneqq \lambda_1 \id_A \otimes L_B + \lambda_2 M_{AB}$ with $L_B, M_{AB} \ge 0$ and $\lambda_1, \lambda_2 >0$. Then we have
\begin{align}
    \Phi_{A|B}(K_{AB}) = \lambda_1 \tr[L_B] + \lambda_2 \Phi_{A|B}(M_{AB}).
\end{align}
\end{lemma}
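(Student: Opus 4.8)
The plan is to prove the identity
\begin{align*}
    \Phi_{A|B}(K_{AB}) = \lambda_1 \tr[L_B] + \lambda_2 \Phi_{A|B}(M_{AB})
\end{align*}
by working with the semidefinite-programming (dual) characterization of the conditional min-entropy. Recalling that $\Phi_{A|B}(K_{AB}) = 2^{-H_{\mathrm{min}}(A|B)_K}$, and using the standard form
\begin{align*}
    2^{-H_{\mathrm{min}}(A|B)_K} = \min_{\sigma_B} \{ \tr[\sigma_B] \ | \ \id_A \otimes \sigma_B \ge K_{AB}, \ \sigma_B \ge 0 \},
\end{align*}
I would express $\Phi_{A|B}(K_{AB})$ as a minimization over operators $\sigma_B \ge 0$ on the $B$ system subject to the operator inequality $\id_A \otimes \sigma_B \ge K_{AB}$. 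The goal is then to show that the optimal $\sigma_B$ for $K_{AB} = \lambda_1 \id_A \otimes L_B + \lambda_2 M_{AB}$ decomposes additively into a contribution from the $\id_A \otimes L_B$ term and a contribution from the $M_{AB}$ term.

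First I would establish the ``$\le$'' direction (i.e.\ that the left-hand side is at most the right-hand side) by constructing a feasible candidate. If $\sigma_B^*$ is optimal for $M_{AB}$, so that $\id_A \otimes \sigma_B^* \ge M_{AB}$ with $\tr[\sigma_B^*] = \Phi_{A|B}(M_{AB})$, then I would check that $\sigma_B \coloneqq \lambda_1 L_B + \lambda_2 \sigma_B^*$ is feasible for $K_{AB}$: indeed
\begin{align*}
    \id_A \otimes (\lambda_1 L_B + \lambda_2 \sigma_B^*) = \lambda_1 \id_A \otimes L_B + \lambda_2 (\id_A \otimes \sigma_B^*) \ge \lambda_1 \id_A \otimes L_B + \lambda_2 M_{AB} = K_{AB},
\end{align*}
using $\lambda_2 > 0$ and $\id_A \otimes \sigma_B^* \ge M_{AB}$. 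This candidate has trace $\lambda_1 \tr[L_B] + \lambda_2 \Phi_{A|B}(M_{AB})$, giving the upper bound.

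For the reverse ``$\ge$'' direction I would argue that any feasible $\sigma_B$ for $K_{AB}$ yields a feasible point for $M_{AB}$ of appropriately reduced trace. The key observation is that $\id_A \otimes \sigma_B \ge K_{AB} = \lambda_1 \id_A \otimes L_B + \lambda_2 M_{AB}$ rearranges to $\id_A \otimes (\sigma_B - \lambda_1 L_B) \ge \lambda_2 M_{AB}$, so defining $\tau_B \coloneqq \lambda_2^{-1}(\sigma_B - \lambda_1 L_B)$ gives $\id_A \otimes \tau_B \ge M_{AB}$. The main obstacle — and the step requiring the most care — is that $\tau_B$ need not be positive semidefinite a priori, so it is not immediately a feasible point of the minimization defining $\Phi_{A|B}(M_{AB})$. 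I would handle this by showing that since $M_{AB} \ge 0$ the constraint $\id_A \otimes \tau_B \ge M_{AB} \ge 0$ forces $\tau_B \ge 0$ (tracing against a maximally entangled vector, or noting that $\id_A \otimes \tau_B \ge 0$ implies $\tau_B \ge 0$), so $\tau_B$ is genuinely feasible. Then $\tr[\tau_B] \ge \Phi_{A|B}(M_{AB})$, whence $\tr[\sigma_B] = \lambda_1 \tr[L_B] + \lambda_2 \tr[\tau_B] \ge \lambda_1 \tr[L_B] + \lambda_2 \Phi_{A|B}(M_{AB})$. Minimizing the left-hand side over all feasible $\sigma_B$ gives the lower bound, and combining the two directions completes the proof.
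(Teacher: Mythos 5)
Your proof is correct and follows essentially the same route as the paper: both work with the SDP characterization $\Phi_{A|B}(K_{AB}) = \inf_{X\ge 0}\{\tr[X_B] \mid \id_A\otimes X_B \ge K_{AB}\}$ and exploit the fact that $M_{AB}\ge 0$ forces the shifted variable $X_B-\lambda_1 L_B$ to be positive semidefinite, which the paper packages as a single change of variables $Y_B \coloneqq X_B - \lambda_1 L_B$ rather than as two separate inequalities. Your two-directional presentation and the paper's substitution argument are interchangeable.
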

\begin{proof}
For the sake of completeness, let us reproduce the proof line given in Lemma~19 of Ref.~\cite{Alexander2022Infinitesimal}. By definition of the conditional min-entropy, we have
\begin{align}
  \Phi_{A|B}(K_{AB}) &\coloneqq \inf_{X \ge 0} \left\{ \tr[X_B] \ | \ \id_A \otimes X_B - K_{AB} \ge 0 \right\}  \notag \\
&=\inf_{X \ge 0} \left\{ \tr[X_B] \ | \ \id_A \otimes ( X_B - \lambda_1  L_B) - \lambda_2 M_{AB} \ge 0 \right\}   \label{subeq:identity_hmin}
\end{align}
Since $\lambda_1 L_B \ge 0$ and $M_{AB} \ge 0$, we have $\id_A \otimes ( X_B - \lambda_1  L_B) - \lambda_2 M_{AB} \ge 0 $ implies $ X_B - \lambda_1  L_B \ge 0 $ which implies $X_B \ge 0$. Therefore, we are free to replace the feasible set over which we perform the optimization in \eqref{subeq:identity_hmin} as follows without effecting the calculation:
\begin{align}
 \Phi_{A|B}(K_{AB})  &=\inf_{ X_B - \lambda_1  L_B \ge 0} \left\{ \tr[X_B] \ | \ \id_A \otimes ( X_B - \lambda_1  L_B) - \lambda_2 M_{AB} \ge 0 \right\}  \notag \\
&=\inf_{ Y_B \ge 0} \left\{ \tr[Y_B + \lambda_1 L_B ] \ | \ \id_A \otimes Y_B - \lambda_2 M_{AB} \ge 0 \right\}  \notag \\
&= \lambda_1 \tr[L_B] + \inf_{ Y_B \ge 0} \left\{ \tr[Y_B  ] \ | \ \id_A \otimes Y_B - \lambda_2 M_{AB} \ge 0 \right\} \notag \\
&= \lambda_1 \tr[L_B] +  \Phi_{A|B}(\lambda_2 M_{AB}) = \lambda_1 \tr[L_B] +\lambda_2   \Phi_{A|B}(M_{AB}),
\end{align}
where in the second equality we have made the substitution $Y_B \coloneqq X_B -\lambda_1 L_B$. This completes the proof.\end{proof}

Now we prove \thmref{thm:Ud_covariant} from the main text, which we reproduce here for convenience. 

\Udcovariant* 
\begin{proof}
Given $U(d)$-covariant encoder $\E_{L \rightarrow P}$ and noise $\N_{P \rightarrow P'}$ according to \lemref{lem:covariant_decoder_optimal} the optimal decoder $\D_{P'}$ can be assumed to also be $U(d)$-covariant. Therefore there $\E$ is $\varepsilon$-correctable with respect to $\N$ iff 
\begin{align}
  \bm{\underline{\smash{\rho}}}_{P'} \stackrel{U(d)}{\rightarrow}_\varepsilon \bm{\underline{\smash{\psi}}}_L,
\end{align}
with 
\begin{align}
    \bm{\underline{\smash{\psi}}}_L &\coloneqq \{ U_g \ketbra{\phi}_L U_g^\dagger \ | \ \forall \ U_g \in U(d_L)\}, \text{ and}  \notag \\
    \bm{\underline{\smash{\rho}}}_{P'} &\coloneqq \N_{P \rightarrow P'} \circ \E_{L \rightarrow P} (\bm{\underline{\smash{\psi}}}_L),
\end{align}
where $\ketbra{\phi}_L$ is any fixed pure state of the logical space $L$. Therefore, and since $U(d_L)$ is compact for finite $d_L$, by \lemref{lem:multi}, $\E$ is $\varepsilon$-correctable with respect to $\N$ iff
\begin{align} \label{subeq:necesarry_suff_ph}
\max_{\{p(h)\}} H_{\mathrm{min}}(L|P')_{\Sigma_{LP'}^p} \le \log\frac{1}{1- \varepsilon},
\end{align}
where the minimization is performed all probability density functions $ p(h) $ over $U(d)$ such that $ \int_{U(d)} dh \, p(h) = 1$ and $p(h) \in \mathbb{R}^+$ for all $h \in U(d)$, and where
\begin{align}
\Sigma_{LP'}^p &\coloneqq \Pi^{U(d)}_{LP'} \left( \int_{U(d)} dh \, p(h) \U^h_L(\ketbra{\phi}_L)^T \otimes (\N \circ \E)_{\tilde{L} \rightarrow P'}\circ\U^h_{\tilde{L}}(\ketbra{\phi}_{\tilde{L}}) \right) \notag \\
&=\Pi^{U(d)}_{LP'} \left( \int_{U(d)} dh \, p(h) \bar{\U}^h_L(\ketbra{\phi}_L^T) \otimes (\N \circ \E)_{\tilde{L} \rightarrow P'}\circ\U^h_{\tilde{L}}(\ketbra{\phi}_{\tilde{L}}) \right) \notag \\
&=\int_{U(d)} dh \, p(h) \int_{U(d)} dg (\bar{\U}^g_L \otimes \U^g_{P'}\circ (\N \circ \E)_{\tilde{L} \rightarrow P'}) (\bar{\U}^h_L \otimes \U^h_{\tilde{L}})(\ketbra{\phi}_L^T \otimes \ketbra{\phi}_{\tilde{L}})  \notag \\
&= (\N \circ \E)_{\tilde{L} \rightarrow P'} \circ \int_{U(d)} dh \, p(h) \int_{U(d)} dg (\bar{\U}^g_L \otimes \U^g_{\tilde{L}}) \circ (\bar{\U}^h_L \otimes \U^h_{\tilde{L}})(\ketbra{\phi}_L^T \otimes \ketbra{\phi}_{\tilde{L}})  \notag \\
&= (\N \circ \E)_{\tilde{L} \rightarrow P'} \circ \int_{U(d)} dh \, p(h) \int_{U(d)} dg \, \bar{\U}^{gh}_L \otimes \U^{gh}_{\tilde{L}} (\ketbra{\phi}_L^T \otimes \ketbra{\phi}_{\tilde{L}}) , \label{subeq:integrals_for_days}
\end{align}
where in the fourth equality we have used the fact that $(\N \circ \E)_{\tilde{L} \rightarrow P'}$ is $U(d)$-covariant, since both $\N$ and $\E$ are themselves $U(d)$-covariant.
Now by the invariance of the uniform Haar measure $dg$, for all fixed $h \in U(d)$ we have 
\begin{align} \label{subeq:integrals_integrals}
    \int_{U(d)} dg  (\bar{\U}^{gh}_L \otimes \U^{gh}_{\tilde{L}})(\ketbra{\phi}_L^T \otimes \ketbra{\phi}_{\tilde{L}})  &= \int_{U(d)} dg'  (\bar{\U}^{g'}_L \otimes \U^{g'}_{\tilde{L}})(\ketbra{\phi}_L^T \otimes \ketbra{\phi}_{\tilde{L}}) \notag \\
    &=\lambda \frac{\id_{L\tilde{L}}}{d_L^2} + (1-\lambda) \frac{\ketbra{\id}_{L\tilde{L}}}{d_L},
\end{align}
where the final equality is proven in Example~49 of the review \cite{Mele2024introductiontohaar}, with 
\begin{align}\lambda =d_L^2 \frac{1-d_L^{-1}\bra{\id}\ketbra{\phi}_L^T \otimes \ketbra{\phi}_{\tilde{L}} \ket{\id}}{d_L^2-1}  . 
\end{align}
Now we have
\begin{align}
    \bra{\id}\phi_L^T \otimes \phi_{\tilde{L}} \ket{\id}= \sum_{i,j} \bra{i} \phi_L^T \ket{j} \bra{i} \phi_{\tilde{L}} \ket{j} = \sum_{i,j} \bra{j} \phi_L \ket{i} \bra{i} \phi_{\tilde{L}} \ket{j}  = \abs{\braket{\phi}}^2=1 ,
\end{align}
and therefore 
\begin{align}
    \lambda = \frac{d_L^2 - d_L}{d_L^2 -1},
\end{align}
independently of which state $U^h_L\ket{\phi}_L$ of the logical system was chosen. Importantly, this implies that the expression appearing in \eqref{subeq:integrals_integrals} is independent of $h \in U(d)$. Therefore, substituting \eqref{subeq:integrals_integrals} into \eqref{subeq:integrals_for_days} we obtain for every probability density function $p(h)$:
\begin{align}
\Sigma_{LP'}^p &\coloneqq (\N \circ \E)_{\tilde{L} \rightarrow P'} \circ  \int_{U(d)} dh \, p(h)  \left( \lambda \frac{\id_{L\tilde{L}}}{d_L^2} + (1-\lambda) \frac{\ketbra{\id}_{L\tilde{L}}}{d_L} \right)  \notag \\
&=(\N \circ \E)_{\tilde{L} \rightarrow P'}  \left(\lambda \frac{\id_{L\tilde{L}}}{d_L^2} + (1-\lambda) \frac{\ketbra{\id}_{L\tilde{L}}}{d_L}\right), \label{subeq:sig_lp}
\end{align}
where we have exploited the fact that $\int_{U(d)} dh \, p(h)=1$. Combining Eqs.~(\ref{subeq:necesarry_suff_ph}) and (\ref{subeq:sig_lp}) we find that $\E$ is $\varepsilon$-correctable with respect to $\N$ iff
\begin{align} \label{subeq:hmin_intermediate}
 H_{\mathrm{min}}(L|P')_{\Sigma_{LP'}} \le \log\frac{1}{1- \varepsilon}, \text{ where } \Sigma_{LP'} = (\N \circ \E)_{\tilde{L} \rightarrow P'} \left(\lambda \frac{\id_{L\tilde{L}}}{d_L^2} + (1-\lambda) \frac{\ketbra{\id}_{L\tilde{L}}}{d_L} \right).
\end{align}
Now, let us recall the definition of the functional $\Phi_{A|B}(K_{AB}) \coloneqq 2^{-H_{\mathrm{min}}(A|B)_{K}}$. Applying the function $2^{- (\cdot)}$ to both sides of \eqref{subeq:hmin_intermediate}, we find that the encoder $\E$ is $\varepsilon$-correctable under the noise $\N_{P \rightarrow P'}$ if and only if
\begin{align} \label{subeq:Ud_covariant_condition}
\Phi_{L|P'}\left(\Sigma_{LP'} \right) \ge 1- \varepsilon, \text{ where } \Sigma_{LP'} \coloneqq \frac{\lambda}{d_L^2} \id_L \otimes (\N \circ \E)_{\tilde{L} \rightarrow P'}(\id_{\tilde{L}})+ (1-\lambda) J_{(\N \circ \E)_{\tilde{L} \rightarrow P'}}.
\end{align}
Invoking \lemref{lemma:Hmin_identity}, the LHS of \eqref{subeq:Ud_covariant_condition} can be equivalently expressed as
\begin{align} \label{subeq:phi_lp_sigma}
\Phi_{L|P'}(\Sigma_{LP'}) &= \frac{\lambda}{d_L^2} \tr[(\N \circ \E)_{\tilde{L} \rightarrow P'}(\id_{\tilde{L}})]+ (1-\lambda) \Phi_{L|P'}(J_{(\N \circ \E)_{\tilde{L} \rightarrow P'}}) \notag \\
&=\frac{\lambda}{d_L^2} \tr[\id_{\tilde{L}}]+ (1-\lambda) \Phi_{L|P'}(J_{(\N \circ \E)_{\tilde{L} \rightarrow P'}}) = \frac{\lambda}{d_L}+ (1-\lambda) \Phi_{L|P'}(J_{(\N \circ \E)_{\tilde{L} \rightarrow P'}})
\end{align}
Combining Eqs.~(\ref{subeq:Ud_covariant_condition}) and (\ref{subeq:phi_lp_sigma}), we find that  $\E$ is $\varepsilon$-correctable under $\N$ if and only if
\begin{align}
\frac{\lambda}{d_L}   + (1-\lambda) \Phi_{L|P'}(J_{(\N \circ \E)_{\tilde{L} \rightarrow P'}}) \ge 1- \varepsilon
\end{align}
which rearranges to 
\begin{align}
    H_{\mathrm{min}}(L|P')_{J_{\N \circ \E}} \le  - \log d_L (1-c \, \varepsilon) ,
\end{align} 
where $c = (d_L+1)/d_L$, which completes the proof.\end{proof}

\section{PROOF OF COROLLARY~\ref{corollary:W-code}}
\label{Appx:proof_corollary_W_code}
For convenience, let us reproduce Corollary~\ref{corollary:W-code} as stated in the main text. 
\Wcode* 
\begin{proof}
The Choi state of the $W$-state encoder $\E^{(n)}$ on $n$ physical qubits as defined in \eqref{eq:W_state_encoder} is given by
\begin{align}
    J(\E^{(n)}) \coloneqq \frac{1}{d_L} \sum_{i,j=0}^{d_L-1} \ketbra{i}{j}_L \otimes \ketbra{i^{(n)}}{j^{(n)}}_{P_1 \dots P_n},
\end{align}
where 
\begin{align}
    \ket{i^{(n)}}_{P_1 \dots P_n} = \frac{1}{\sqrt{n}} (\ket{i,d_L, \cdots ,d_L } + \ket{d_L,i,  \cdots ,d_L } +\dots  + \ket{d_L , \dots,d_L ,  i }),
\end{align}
for $i \in \{0,\dots, d_L -1 \}$, are the codewords of $\E^{(n)}$. 
This encoding is symmetric with respect to each of the $n$ physical subsystems. Therefore tracing out any of the $n$ qubits will give the same state. Without loss of generality then, here we shall compute the Choi state of the encoder followed by erasure of the first $N_e$ physical subsystems. First, however, let us compute the Choi state following erasure of the first physical subsystem $P_1$:
\begin{align}
    J(\tr_{P_1}\circ \E^{(n)}) \coloneqq \frac{1}{d_L} \sum_{i,j=0}^{d_L-1} \ketbra{i}{j}_L \otimes \tr_{P_1} \ketbra{i^{(n)}}{j^{(n)}}. \label{subeq:J1}
\end{align}
Now we have
\begin{align}
     \tr_{P_1} \ketbra{i^{(n)}}{j^{(n)}} = \sum_{k=0}^{d_L} \bra{k}_{P_1} \ket{i^{(n)}}_{P_1 \dots P_n}\bra{j^{(n)}} \ket{k}_{P_1},\label{subeq:J2}
\end{align}
where,
\begin{align}
     \bra{k}_{P_1}\ket{i^{(n)}}_{P_1 \dots P_n}  &= \frac{1}{\sqrt{n}} ( \bra{k}_{P_1}\ket{i,d_L, \cdots ,d_L } + \bra{k}_{P_1}\ket{d_L,i,  \cdots ,d_L } +\dots  + \bra{k}_{P_1}\ket{d_L , \dots,d_L ,  i }) \notag \\
 &= \frac{1}{\sqrt{n}} (\delta_{k,i} \ket{d_L ,\dots, d_L} + \delta_{k, d_L} \sqrt{n-1} \ket{i^{(n-1)}}. \label{subeq:J3}
\end{align}
Therefore, combining Eqs.~(\ref{subeq:J2}) and (\ref{subeq:J3}), we obtain  
\begin{align}
     \tr_{P_1} \ketbra{i^{(n)}}{j^{(n)}} = \frac{1}{n}\delta_{i,j} \ketbra{d_L, \dots , d_L}{d_L,\dots, d_L} + \frac{n-1}{n} \ketbra{i^{(n-1)}}{j^{(n-1)}}. \label{subeq:J4}
\end{align}
Substituting \eqref{subeq:J4} into \eqref{subeq:J1}, we arrive at
\begin{align}
    J(\tr_{P_1}\circ \E^{(n)}) &= \frac{1}{n d_L} \sum_{i,j=0}^{d_L-1} \ketbra{i}{j}_L \otimes \left(\delta_{i,j} \ketbra{d_L, \dots , d_L}{d_L,\dots, d_L} + (n-1) \ketbra{i^{(n-1)}}{j^{(n-1)}}\right) \notag \\
    &= \frac{1}{d_L n} \id_L \otimes  \ketbra{d_L, \dots , d_L}{d_L,\dots, d_L} +  \left(1 - \frac{1}{n} \right)  J(\E^{(n-1)}).
\end{align}
By recursion, we find that the erasure of the first $N_e$ qubits represented by $\tr_{ P_{1} ,\dots , P_{N_e}}$ can be written  
\begin{align}
    J(\tr_{ P_{1} ,\dots , P_{N_e}}\circ \E^{(n)}) = \frac{N_e}{d_L n} \id_L \otimes  \ketbra{d_L, \dots , d_L}{d_L,\dots, d_L}_{P_{N_e  +1} \dots P_n} + \left(1 - \frac{N_e}{n} \right) J(\E^{(n-N_e)})_{LP_{N_e +1} \dots P_n}.
\end{align}

Invoking \lemref{lemma:Hmin_identity} we find that 
\begin{align}
    \Phi_{L|P_{N_e +1} \dots P_n}[  J(\tr_{ P_{1} ,\dots , P_{N_e}}\circ \E^{(n)}) ] &= \frac{N_e}{d_L n} + \left(1 - \frac{N_e}{n} \right)  \Phi_{L|P_{N_e +1} \dots P_n} \left[  J(\E^{(n-N_e)})\right] \notag \\ 
    &=\frac{N_e}{d_L n} + \left(1 - \frac{N_e}{n} \right) \Phi_{L|P_{N_e +1} \dots P_n}\left[  \mathrm{id}_L \otimes \E^{(n-N_e)}_{\tilde{L}\rightarrow P} \left(\ketbra{\phi^+}_{L \tilde{L}}\right) \right] \notag \\
    &=\frac{N_e}{d_L n} + \left(1 - \frac{N_e}{n} \right) d_L .
\end{align}
Now since $\E_{L \rightarrow P}$ admits a transversal implementation of the full unitary group, by Corollary~\ref{cor:approx_eastin_knill}, $\E_{L \rightarrow P}$ is $\varepsilon$-correctable with respect to single subsystem erasure if and only if 
\begin{align} \label{subeq_params_Ne}
 \frac{N_e}{d_L n} + \left(1 - \frac{N_e}{n} \right) d_L \ge d_L (1 - c\varepsilon).
\end{align}
Finally, we can rearrange \eqref{subeq_params_Ne} to obtain the following equivalent condition
\begin{align}
    \varepsilon \ge \frac{N_e}{n} \left( 1 - \frac{1}{d_L}\right),
\end{align}
which completes the proof. \end{proof}

\end{widetext}

\bibliographystyle{apsrev4-2}
\bibliography{bibliog}
\end{document}